\newtheorem{theorem}{Theorem}[section]
\newtheorem{theorem*}[theorem]{Theorem}
\newtheorem{lemma}[theorem]{Lemma}
\newtheorem{proposition}[theorem]{Proposition}
\theoremstyle{remark}
\newtheorem{remark}[theorem]{Remark}
\numberwithin{equation}{section}
\begin{document}

\title{Solutions of $BC_{n}$ Type of WDVV Equations}

\author{Maali Alkadhem}
\address{School of Mathematics and Statistics, University of Glasgow, G12 8QQ, UK}
\email{m.alkadhem.1@research.gla.ac.uk, g.antoniou.1@research.gla.ac.uk    \\ misha.feigin@glasgow.ac.uk}

  \thanks{The work of M.A. was funded by Imam Abdulrahman Bin Faisal University, Kingdom of Saudi Arabia.}

\author{Georgios Antoniou}

\thanks{The work of G.A. was funded by EPSRC doctoral training partnership grants
EP/M506539/1, EP/M508056/1, EP/N509668/1.}

\author{Misha Feigin}

\dedicatory{To the memory of Boris Dubrovin}

\begin{abstract}
We give a family of solutions of Witten--Dijkgraaf--Verlinde--Verlinde equations in $n$-dimensional space. It is defined in terms of $BC_{n}$ root system and $n+2$ independent multiplicity parameters. We also apply these solutions to define some ${\mathcal N}=4$ supersymmetric mechanical systems.
\end{abstract}

\maketitle

\section{Introduction}

In this note we are interested in trigonometric solutions of Witten-Dijkgraaf-Verlinde-Verlinde (WDVV) equations. Let us recall these equations, their special solutions and where they arise. 

Let $F=F(x_1, \dots, x_n)$ be a function in $V\cong \mathbb{C}^{n}$. Consider a vector field 
$$
e=\sum_{i=1}^{n}A_{i}(x)\partial_{x_{i}}, 
$$ 
where $A_{i}(x)=A_{i}(x_{1},\dots,x_{n})$ 
are some functions.
Define $n\times n$ matrix $B=(B_{ij})_{i,j=1}^{n}$ by
\begin{equation}\label{Bmatrix}
B_{ij}=e(F_{ij})=\sum_{k=1}^{n}A_{k}(x)F_{ijk}, \quad i,j=1,\dots, n,
\end{equation}
where
\begin{align*}
F_{ijk} = \frac{\partial^{3} F}{\partial x_i \partial x_j \partial x_k}.
\end{align*}
Then WDVV equations are the following equations for the function $F$: 
\begin{equation}\label{GDVV1}
\ F_{i}B^{-1}F_{j}=F_{j}B^{-1}F_{i},\quad i,j=1,...,n,
\end{equation}
where $F_i$ is the $n\times n$ matrix constructed from the third order partial derivatives: $(F_i)_{jk}=F_{ijk}$
and $B^{-1}$ is the inverse of $B.$ It was noted in \cite{M + M + M 2000} (see also \cite{Martini+Gragert 1999}) that equations (\ref{GDVV1}) are equivalent to generalized WDVV equations 
\begin{equation}\label{wdvv.equations}
F_{i}F_{k}^{-1}F_{j}=F_{j}F_{k}^{-1}F_{i},
\end{equation}
where $ i,j,k=1,...,n,$
provided that all the matrices $F_{k}$ and $B$ are invertible.  

Rational solutions of equations (\ref{wdvv.equations}) have the form 
\begin{equation}\label{rational.solution}
F^{rat}=\sum_{\alpha\in\mathcal{A}}(\alpha,x)^{2}\log(\alpha,x),
\end{equation}
where $\mathcal{A}$ is a configuration of vectors in $V$.
Solutions of the form (\ref{rational.solution}) for the root systems $\mathcal{A}$ appear in Frobenius manifolds theory as almost dual prepotentials for the Coxeter orbit spaces polynomial Frobenius manifolds \cite{Dubrovin 2004}.
They also appear in Seiberg-Witten theory as perturbative parts of prepotentials \cite{M + M + M 2000}.
For general $\mathcal{A}$ they were interpreted geometrically in \cite{Veselov 1999} through the introduced notion of a $\vee$-system. Multiparameter deformations related to $A_{n}$ and $B_{n}$ root systems were constructed in \cite{Chalykh+ Veselov 2001}. The class of $\vee$-systems and the corresponding rational solutions (\ref{rational.solution}) are closed under the natural operation of taking  restrictions and subsystems \cite{Feigin+Veselov 2007},  \cite{Misha&Veslov 2008}.

Trigonometric generalisation of solutions (\ref{rational.solution}) also arise in theory of Frobenius manifolds. These solutions have the form 
\begin{equation}\label{trigonometric.solution}
F^{trig}=\sum_{\alpha\in\mathcal{A}}c_{\alpha}f((\alpha,x))+Q,
\end{equation}
where $c_{\alpha}\in\mathbb{C}$ are some multiplicity parameters and $Q$ is a cubic polynomial in $x=(x_1,\dots,x_n)$ which also often depends on additional variable $y,$ and $f$ is the function of a single variable $z$ given by 
\begin{align}\label{f(z)}
f(z)= \frac{1}{6} z^3 - \frac{1}{4} \mathrm{Li}_3 (e^{-2z}), 
\end{align}
so that $f^{'''}(z)= \coth z$. 
Such solutions appear as almost dual prepotentials for the orbit spaces of affine Weyl groups Frobenius manifolds (\cite{Dubrovin+ Zhang 1998}, \cite{Dubrovin+ Zhang+Zuo 2005}, \cite{Dubrovin+Strachan+ Zhang+Zuo 2019}).
They also appear in the study of quantum cohomology of resolutions of simple singularities \cite{Bryan+ Gholampour 2008}.
Trigonometric version of a $\vee$-system type geometrical conditions for solutions (\ref{trigonometric.solution}) was proposed in \cite{Misha2009}. Trigonometric solutions and $\vee$-systems allow operations of taking restrictions and subsystems respectively \cite{Misha & Maali}. 

In this work we are interested in the case when the cubic corrections are absent, that is $Q=0.$
The corresponding solution (\ref{trigonometric.solution}) does not exist in general even for the case of root system $\mathcal{A}$ and invariant multiplicities $c_{\alpha}.$ However, it is known to exist for the root system $B_{n}$ and specific choice of invariant multiplicities \cite{Hoevenaars+Martini 2003}. 
In this note we generalize this solution so that it is included in $(n+2)$-parametric family. The underlying configuration $\mathcal{A}$ is the positive half of $BC_{n}$ root system, and multiplicities are chosen in a specific way. In order to get such a solution we find firstly a two-parameter family of solutions where the configuration $\mathcal{A}$ is the positive half of $BC_{n}$ and multiplicities are Weyl-invariant. We obtain solutions with many parameters by taking special restrictions of these solutions (cf. \cite{Misha&Veslov 2008}, \cite{Misha & Maali}).      

We also apply these solutions in order to construct $\mathcal{N}=4$ supersymmetric mechanical systems. Relations of such mechanical systems with WDVV equations were known since \cite{Wyllard 2000} and \cite{Belluci+G+Latini 2005}.  
Trigonometric solutions of WDVV equations were used to construct $\mathcal{N}=4$ supersymmetric Hamiltonians in \cite{AF}. This gave, in particular, supersymmetric version of quantum Calogero--Moser--Sutherland system of type $BC_n$ with two independent coupling parameters. Thus we extend this Hamiltonian into multiparameter family. 
Other $\mathcal{N}=4$ supersymmetric systems related with Calogero--Moser--Sutherland system with extra spin variables were obtained in \cite{FIL} using the superfield approach.

It would be interesting to see whether constructed solutions of WDVV equations admit extra Frobenius manifolds structures or further links to geometry as it happens for root systems solutions with non-zero cubic terms $Q$ depending on an extra variable. 
Possible elliptic generalizations may also be interesting (see \cite{Riley+Strachan 2006}, \cite{BMMM 2007} and \cite{Strachan 2010} for elliptic solutions of WDVV equations).

\section{Metric for a family of $BC_n$ type configurations}\label{metricsection}
We are going to present solution $F$ to the equations (\ref{GDVV1}) for a suitable vector field $e.$ This solution is related to $BC_{n}$ root system with prescribed multiplicities of the root vectors.
Let $\underline{m}=(m_1, \dots, m_n)\in \mathbb{C}^{n}$ and let
$r,s,q \in \mathbb{C}.$ 
Let $BC_n(r, s,q ; \underline{m}) \subset \mathbb{C}^n$ be the following configuration of vectors $\alpha $ with corresponding multiplicities $c_\alpha$: 
\begin{align*}
e_i, \quad  \text{with multiplicities} \quad r m_i,\quad 1 \leq i \leq n,\\
 2e_i, \quad \text{with multiplicities} \quad s m_i + \frac{1}{2} qm_i(m_i -1),\quad 1 \leq i \leq n,\\
 e_i \pm e_j, \quad \text{with multiplicities} \quad qm_i m_j, \quad 1\leq i < j \leq n,
\end{align*}
where $e_1, \ldots, e_n$ is the standard basis in ${\mathbb C}^n$. 
If all the multiplicities $m_{i}=1$ then the configuration reduces to the configuration $BC_{n}(r,s,q)$ which is a positive half of the root system $BC_{n}$ with an invariant collection of multiplicities $r,s,q.$

Let us consider the function $F$  given by
\begin{align}\label{prepotential}
F= \sum_{\alpha \in BC_n(r, s,q ; \underline{m})} c_\alpha f((\alpha,x)),
\end{align}
where $f$ is given by (\ref{f(z)}). 
More explicitly the function $F$ can be written as follows:
\begin{align}\label{restrictedpotential}
F=  \sum_{i=1}^n r m_i f( x_i) + \sum_{i=1}^n ( s m_i + \frac{1}{2}qm_i(m_i -1)) f(2x_i)+  \sum_{i <j}^n qm_i m_j  f (x_i \pm  x_j).
\end{align}
Let us now define the matrix (\ref{Bmatrix}) by taking 
\begin{equation}\label{Ak}
 A_k = \sinh 2 x_k,
\end{equation}
where $k=1,...,n.$
This choice is motivated by \cite{Hoevenaars+Martini 2003} where a solution of WDVV equation (\ref{GDVV1}) for the root system $B_{n}$ was obtained.

Let us also define the following functions: 
\begin{align*}
b_{ij} = 
\begin{dcases}
\coth( x_i + x_j) + \coth( x_i - x_j) , \quad 1 \leq i \neq j \leq n,\\
0, \quad i=j,
\end{dcases}
\end{align*}
and 
\begin{equation*}
b_i = \coth x_i , \quad \quad \widetilde{b}_i= \coth 2 x_i, \quad i=1, \dots, n. 
\end{equation*}
\begin{lemma}\label{derevative of F}
We have the following expression for the third order derivatives of $F$:
\begin{align}\label{restrictedprepderi}
F_{klt} &= r  m_k b_k \delta_{kl} \delta_{lt}+ 4( 2 s m_k +  qm_k(m_k-1) ) \widetilde{b}_k  \delta_{kl} \delta_{lt}  + q \delta_{kl} \delta_{lt} \sum_{\substack{ j=1 \\ j \neq k}}^n m_j m_k b_{kj}  \\
&+ q  m_t m_k b_{tk}\delta_{kl} + q  m_l m_k b_{lk}\delta_{kt} + q  m_k m_l b_{kl}\delta_{lt} \nonumber,
\end{align}
where $k,l,t=1,...,n,$ and $\delta$ is the Kronecker symbol.
\end{lemma}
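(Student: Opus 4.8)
The plan is to compute $F_{klt}$ directly by differentiating the explicit form \eqref{restrictedpotential} three times, using the single-variable identity $f'''(z) = \coth z$ together with the chain rule, and then to reorganize the resulting sum into the six groups displayed in \eqref{restrictedprepderi}. First I would recall that for a linear form $\ell(x) = (\alpha, x)$ one has $\partial_{x_k}\partial_{x_l}\partial_{x_t} f(\ell) = \alpha_k \alpha_l \alpha_t f'''(\ell) = \alpha_k \alpha_l \alpha_t \coth \ell$. Applying this termwise to the four families of vectors $\alpha$ occurring in $BC_n(r,s,q;\underline m)$ gives the contributions: the vectors $e_i$ contribute $r m_i \,\delta_{ki}\delta_{li}\delta_{ti}\coth x_i$, the vectors $2e_i$ contribute $(2^3)(s m_i + \tfrac12 q m_i(m_i-1))\,\delta_{ki}\delta_{li}\delta_{ti}\coth 2x_i$, and the vectors $e_i \pm e_j$ ($i<j$) contribute $q m_i m_j$ times the sum over the two sign choices of $(\pm)_{\text{factors}} \coth(x_i \pm x_j)$.

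The key bookkeeping step is to handle the $e_i \pm e_j$ family carefully. For $\alpha = e_i + e_j$ we get a factor $\coth(x_i+x_j)$ with coefficient $(\delta_{ki}+\delta_{kj})(\delta_{li}+\delta_{lj})(\delta_{ti}+\delta_{tj})$, and for $\alpha = e_i - e_j$ we get $\coth(x_i - x_j)$ with coefficient $(\delta_{ki}-\delta_{kj})(\delta_{li}-\delta_{lj})(\delta_{ti}-\delta_{tj})$ — note the sign from $(e_i - e_j)_j = -1$ appearing cubed. Summing these two and expanding the products of Kronecker deltas, the terms with an odd number of $j$-indices survive in the difference $\coth(x_i+x_j) - \coth(x_i-x_j)$ while those with an even number survive in the sum $\coth(x_i+x_j)+\coth(x_i-x_j) = b_{ij}$. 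Collecting the fully-diagonal contributions (all three indices equal to some common value) across the whole sum over $i<j$ produces exactly the term $q\,\delta_{kl}\delta_{lt}\sum_{j\neq k} m_j m_k b_{kj}$, while the contributions where exactly two of $k,l,t$ coincide produce the three terms $q m_t m_k b_{tk}\delta_{kl} + q m_l m_k b_{lk}\delta_{kt} + q m_k m_l b_{kl}\delta_{lt}$; one should check the mixed $\coth(x_i+x_j)-\coth(x_i-x_j)$ pieces either vanish or get absorbed, since $b_{ij}$ is the only combination appearing. I would also note that the diagonal $e_i$ and $2e_i$ pieces give the first two terms of \eqref{restrictedprepderi} after writing $4(2sm_k + qm_k(m_k-1)) = 8(sm_k + \tfrac12 qm_k(m_k-1))$.

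The main obstacle is purely combinatorial rather than conceptual: one must be scrupulous about the symmetry factors and the cases where two or three of the indices $k,l,t$ coincide, to verify that no cross terms are lost and that the coefficient $q m_i m_j$ distributes correctly among the last three ``two-equal-indices'' terms and the ``all-equal-indices'' diagonal sum. A clean way to organize this is to first write $F_{klt} = \sum_{\alpha} c_\alpha \alpha_k \alpha_l \alpha_t \coth(\alpha,x)$ as a sum over the configuration, then split into the cases $k=l=t$, exactly two of $\{k,l,t\}$ equal, and all distinct, and match each case against \eqref{restrictedprepderi}; in the all-distinct case both sides are visibly zero, in the two-equal case only the $e_i\pm e_j$ family with $\{i,j\}$ equal to the two distinct values among $\{k,l,t\}$ contributes, and in the all-equal case one sums all four families. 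Assembling these cases yields \eqref{restrictedprepderi}.
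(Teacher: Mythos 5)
Your proposal is correct and follows essentially the same route as the paper: differentiate \eqref{restrictedpotential} termwise using $f'''(z)=\coth z$, expand the Kronecker-delta products $(\delta_{ki}\pm\delta_{kj})(\delta_{li}\pm\delta_{lj})(\delta_{ti}\pm\delta_{tj})$ for the $e_i\pm e_j$ family, and collect according to how the indices $k,l,t$ coincide. The one point you flag as needing verification does check out, since $\coth(x_i+x_j)-\coth(x_i-x_j)=b_{ji}$, so the odd-parity pieces are exactly the $b$'s with reversed index order appearing in \eqref{restrictedprepderi}.
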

\begin{proof}
We note that the first two terms in \eqref{restrictedprepderi} are obtained from the first two terms in formula \eqref{restrictedpotential}. The last term in \eqref{restrictedpotential} contributes the following sum in $F_{klt}$:
\begin{align}\label{newterm}
q\sum_{i < j}^n m_i m_j \Big( (\delta_{ki}+ \delta_{kj})(\delta_{li}+ \delta_{lj})(\delta_{ti}+ \delta_{tj}) \coth(x_i + x_j) \\+ (\delta_{ki}- \delta_{kj})(\delta_{li}- \delta_{lj})(\delta_{ti}- \delta_{tj}) \coth(x_i - x_j)\Big) \nonumber.
\end{align}
We rearrange some of the terms in \eqref{newterm} as follows:
\begin{align}\label{newterm1}
q\sum_{i < j}^n m_i m_j ( \delta_{ki} \delta_{li} \delta_{ti} + \delta_{kj} \delta_{lj} \delta_{tj} ) \coth(x_i + x_j) =q \sum_{\substack{ j=1 \\ j \neq k}}^n m_j m_k \delta_{kl} \delta_{lt} \coth(x_k + x_j),
\end{align}
and
\begin{align}\label{newterm2}
q\sum_{i < j}^n m_i m_j ( \delta_{ki} \delta_{li} \delta_{ti} - \delta_{kj} \delta_{lj} \delta_{tj} ) \coth(x_i - x_j) = q\sum_{\substack{ j=1 \\ j \neq k}}^n m_j m_k \delta_{kl} \delta_{lt}  \coth(x_k - x_j). 
\end{align}
The sum of expressions \eqref{newterm1} and \eqref{newterm2} equals the third term in \eqref{restrictedprepderi}. Further on, let us collect the following terms from \eqref{newterm}:
\begin{align}\label{newterm3}
q\sum_{i < j}^n m_i m_j \big( \delta_{ki} \delta_{li} \delta_{tj} +\delta_{kj} \delta_{lj} \delta_{ti}  \big) \coth(x_i + x_j) =
\begin{dcases}
q  m_t m_k \delta_{kl}\coth(x_t + x_k), \quad t \neq k, \\
0, \quad t=k,
\end{dcases}
\end{align}
and
\begin{align}\label{newterm4}
q\sum_{i < j}^n m_i m_j \big( \delta_{ki} \delta_{li} \delta_{tj} -\delta_{kj} \delta_{lj} \delta_{ti}  \big) \coth(x_j - x_i) =
\begin{dcases}
 q m_t m_k \delta_{kl} \coth(x_t - x_k), \quad t \neq k ,\\
 0, \quad t=k.
\end{dcases}
\end{align}
The sum of the terms \eqref{newterm3} and \eqref{newterm4} equals to $q  m_t m_k b_{tk}\delta_{kl} $. Similarly, the sum of the terms 
\begin{align*}
q\sum_{i < j}^n m_i m_j \big( \delta_{ki} \delta_{lj} \delta_{tj} +\delta_{kj} \delta_{li} \delta_{ti}  \big) \coth(x_i + x_j) =
\begin{dcases}
q  m_k m_l \delta_{lt} \coth(x_k + x_l), \quad k \neq l, \\
0, \quad k=l,
\end{dcases}
\end{align*}
and
\begin{align*}
q\sum_{i < j}^n m_i m_j \big( \delta_{ki} \delta_{lj} \delta_{tj} -\delta_{kj} \delta_{li} \delta_{ti}  \big) \coth(x_i - x_j) =
\begin{dcases}
 q m_k m_l \delta_{lt} \coth(x_k - x_l), \quad k \neq l ,\\
 0, \quad k=l
\end{dcases}
\end{align*}
equals $q m_k m_l b_{kl}\delta_{lt}$. Finally, the sum of the following terms
\begin{align*}
q\sum_{i < j}^n m_i m_j \big( \delta_{ki} \delta_{lj} \delta_{ti} +\delta_{kj} \delta_{li} \delta_{tj}  \big) \coth(x_i + x_j) =
\begin{dcases}
q  m_l m_k \delta_{kt} \coth(x_l + x_k), \quad l \neq k, \\
0, \quad l=k,
\end{dcases}
\end{align*}
and
\begin{align*}
q\sum_{i < j}^n m_i m_j \big( \delta_{ki} \delta_{lj} \delta_{ti} -\delta_{kj} \delta_{li} \delta_{tj}  \big) \coth(x_j - x_i) =
\begin{dcases}
 q m_{l} m_{k} \delta_{kt} \coth(x_l - x_k), \quad k \neq l ,\\
 0, \quad l=k
\end{dcases}
\end{align*}
equals $q m_l m_k b_{lk} \delta_{kt}$. The statement follows. 
\end{proof}
\begin{lemma}\label{lemmaiden} We have the following identities:
\begin{align}\label{iden1}
A_k b_{kj}  + A_j b_{jk} = 2 ( \cosh 2 x_k + \cosh 2 x_j) , \quad  1 \leq j \neq k  \leq n,
\end{align} 
and 
\begin{align}\label{iden2}
A_k b_{jk}  + A_j b_{kj} =0, \quad j, k =1, \dots, n,
\end{align}
where $A_k$ is given by (\ref{Ak}).
\end{lemma}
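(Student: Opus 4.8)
The plan is to first rewrite the function $b_{kj}$ for $k\neq j$ in closed form by combining the two hyperbolic cotangents over a common denominator. Writing
\begin{equation*}
b_{kj} = \frac{\cosh(x_k+x_j)\sinh(x_k-x_j) + \cosh(x_k-x_j)\sinh(x_k+x_j)}{\sinh(x_k+x_j)\sinh(x_k-x_j)},
\end{equation*}
one simplifies the numerator by the addition formula $\sinh(a+b)=\sinh a\cosh b+\cosh a\sinh b$ (with $a=x_k-x_j$, $b=x_k+x_j$) to get $\sinh 2x_k$, and the denominator by the product-to-sum identity $\sinh a\sinh b=\tfrac12(\cosh(a+b)-\cosh(a-b))$ to get $\tfrac12(\cosh 2x_k-\cosh 2x_j)$. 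Hence
\begin{equation*}
b_{kj} = \frac{2\sinh 2x_k}{\cosh 2x_k-\cosh 2x_j},\qquad k\neq j,
\end{equation*}
and since $A_k=\sinh 2x_k$ this yields $A_k b_{kj}=\dfrac{2\sinh^2 2x_k}{\cosh 2x_k-\cosh 2x_j}$.

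For \eqref{iden1} I would then add $A_k b_{kj}$ and $A_j b_{jk}$: the two denominators differ only by a sign, so the sum equals $\dfrac{2(\sinh^2 2x_k-\sinh^2 2x_j)}{\cosh 2x_k-\cosh 2x_j}$. Using $\sinh^2 z=\cosh^2 z-1$, the numerator factorises as $2(\cosh 2x_k-\cosh 2x_j)(\cosh 2x_k+\cosh 2x_j)$, and cancelling the common factor $\cosh 2x_k-\cosh 2x_j$ leaves $2(\cosh 2x_k+\cosh 2x_j)$, as claimed.

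For \eqref{iden2} the case $j=k$ is immediate since $b_{kk}=0$. For $j\neq k$ the same closed form gives $A_k b_{jk}=\dfrac{2\sinh 2x_k\sinh 2x_j}{\cosh 2x_j-\cosh 2x_k}$ and $A_j b_{kj}=\dfrac{2\sinh 2x_j\sinh 2x_k}{\cosh 2x_k-\cosh 2x_j}$, which are manifestly negatives of one another, so their sum vanishes.

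There is essentially no serious obstacle here; the only point requiring care is the bookkeeping of the addition and product-to-sum formulas, in particular the identity $\sinh(x_k+x_j)\sinh(x_k-x_j)=\tfrac12(\cosh 2x_k-\cosh 2x_j)$, which is exactly what makes both statements collapse. A more mechanical alternative would be to clear all denominators and verify the resulting polynomial identity in $e^{\pm x_k},e^{\pm x_j}$ directly, avoiding any case distinction at the cost of transparency.
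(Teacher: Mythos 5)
Your proof is correct and follows essentially the same route as the paper: both combine the two hyperbolic cotangents in $b_{kj}$ over the common denominator $\sinh(x_k+x_j)\sinh(x_k-x_j)=\tfrac12(\cosh 2x_k-\cosh 2x_j)$ and then simplify the resulting ratio. Your write-up is merely a little more explicit about the intermediate closed form $b_{kj}=2\sinh 2x_k/(\cosh 2x_k-\cosh 2x_j)$ and about the case $j=k$ in \eqref{iden2}.
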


\begin{proof}
We have 
\begin{align*}
A_k b_{kj} + A_j b_{jk} = \frac{A_k \sinh2x_k - A_j \sinh2x_j}{\sinh(x_k+ x_j) \sinh(x_k -x_j)}= \frac{\cosh4x_k - \cosh4x_j}{\cosh2x_k - \cosh2x_j},
\end{align*}
which implies the first formula \eqref{iden1}. Identity \eqref{iden2} follows similarly. 
\end{proof}
Now we show that the matrix $B$ is diagonal. Moreover, it is proportional to a constant diagonal matrix under a particular restriction on the parameters $r, q, s$. 

\begin{proposition}\label{B undermultconditions} 
The matrix $B=B(x)$ with the matrix entries
$$B_{lt}=\sum_{k=1}^{n} A_{k} F_{klt},\quad l,t =1,\dots,n$$
is diagonal. Furthermore, if the multiplicities $r$, $q$, $s$ and $\underline{m}$ satisfy the relation
\begin{align}\label{linearrelationmult}
r= -8s -2q (N-2),
\end{align}
where $N=\sum_{k=1}^{n}m_{k},$
then the matrix $B$ takes the form
\begin{align}\label{Bentries}
B_{lt} = m_l h(x) \delta_{lt},
\end{align}
where $h(x)= 2q \sum_{k=1}^n m_k \cosh 2x_k + r $.
\end{proposition}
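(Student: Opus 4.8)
The plan is to compute the matrix entry $B_{lt}=\sum_{k=1}^n A_k F_{klt}$ directly by substituting the expression for $F_{klt}$ from Lemma~\ref{derevative of F} and then simplify using the identities of Lemma~\ref{lemmaiden}. First I would split the sum over $k$ according to the six groups of terms in \eqref{restrictedprepderi}. The first three groups carry a factor $\delta_{kl}\delta_{lt}$, so after summing over $k$ they only survive when $l=t$ and contribute to the diagonal a term
\begin{align*}
A_l\Big( r m_l b_l + 4(2 s m_l + q m_l(m_l-1))\widetilde b_l + q\sum_{j\neq l} m_j m_l b_{lj}\Big).
\end{align*}
Here I would use $A_l b_l = \sinh 2x_l \coth x_l = 2\cosh^2 x_l = 1+\cosh 2x_l$ and $A_l\widetilde b_l = \sinh 2x_l\coth 2x_l = \cosh 2x_l$, turning the first two pieces into explicit hyperbolic polynomials; the third piece stays for now.

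Next I would handle the last three groups in \eqref{restrictedprepderi}, namely $q m_t m_k b_{tk}\delta_{kl} + q m_l m_k b_{lk}\delta_{kt} + q m_k m_l b_{kl}\delta_{lt}$. Summing $\sum_k A_k$ against these: the $\delta_{kl}$ term gives $q m_t m_l A_l b_{tl}$, the $\delta_{kt}$ term gives $q m_l m_t A_t b_{lt}$, and the $\delta_{lt}$ term gives $\delta_{lt}\, q m_l \sum_k A_k m_k b_{kl}$. For the off-diagonal case $l\neq t$ only the first two survive, and their sum is $q m_l m_t (A_l b_{tl} + A_t b_{lt})$, which vanishes by identity \eqref{iden2}; this is exactly what proves $B$ is diagonal. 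For the diagonal case $l=t$, the first two coincide and give $2 q m_l^2 A_l b_{ll} = 0$ since $b_{ll}=0$, while the third becomes $q m_l\sum_{k\neq l} A_k m_k b_{kl}$.

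Thus on the diagonal, collecting everything, $B_{ll}$ equals
\begin{align*}
A_l r m_l b_l + 4 A_l(2 s m_l + q m_l(m_l-1))\widetilde b_l + q m_l\sum_{k\neq l}\big( A_l m_k b_{lk} + A_k m_k b_{kl}\big).
\end{align*}
In the sum over $k\neq l$ I would now apply identity \eqref{iden1}: $A_l b_{lk} + A_k b_{kl} = 2(\cosh 2x_l + \cosh 2x_k)$, so the sum becomes $2 q m_l\sum_{k\neq l} m_k(\cosh 2x_l + \cosh 2x_k)$. Substituting the values of $A_l b_l$ and $A_l\widetilde b_l$ found above, $B_{ll}$ becomes a completely explicit expression: $m_l$ times a combination of $1$, $\cosh 2x_l$, and $\sum_k m_k\cosh 2x_k$, with coefficients built from $r,s,q$ and $N=\sum_k m_k$. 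The main obstacle — really the only delicate point — is bookkeeping: making sure the $\delta$-contractions in the last three terms of \eqref{restrictedprepderi} are resolved correctly (which index is free, which is summed) and that the $k=l$ exclusions are tracked, so that no spurious constant terms are lost or double-counted.

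Finally I would impose the relation \eqref{linearrelationmult}, $r = -8s - 2q(N-2)$, and check that it is exactly the condition making the coefficient of the ``stray'' constant term (the part not proportional to $\cosh 2x_l$ or to $\sum_k m_k\cosh 2x_k$) combine correctly so that $B_{ll}$ collapses to $m_l\big(2q\sum_{k=1}^n m_k\cosh 2x_k + r\big) = m_l h(x)$. Concretely: the constant contributions are $r m_l$ (from $A_l b_l$), $4(2s+q(m_l-1))\,m_l$-type constants (from $A_l\widetilde b_l$ giving $\cosh 2x_l$, whose constant part is absorbed) — I would expand $\cosh 2x_l$ contributions carefully — and $2q m_l(N - m_l)$-type constants from the $\sum_{k\neq l}$ term; verifying their sum equals $r m_l$ after using \eqref{linearrelationmult} while the $\cosh$ terms assemble into $2q m_l\sum_k m_k\cosh 2x_k$ completes the proof. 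This last arithmetic check of the linear relation is the natural place an error could hide, so I would do it slowly.
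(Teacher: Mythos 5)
Your proposal is correct and follows essentially the same route as the paper: off-diagonal entries vanish via identity \eqref{iden2}, the diagonal is reduced with $A_lb_l=1+\cosh 2x_l$, $A_l\widetilde b_l=\cosh 2x_l$ and identity \eqref{iden1}, and the relation \eqref{linearrelationmult} removes the leftover term. One small correction to your final bookkeeping: the relation $r=-8s-2q(N-2)$ annihilates the coefficient of $\cosh 2x_l$, namely $m_l\bigl(r+8s+4q(m_l-1)+2q(N-2m_l)\bigr)$, while the constant contribution $rm_l$ (coming solely from $A_lb_l$) survives unchanged as the $r$ inside $h(x)$ — it is not the constant term that the relation fixes.
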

\begin{proof}
It follows by Lemma \ref{derevative of F} that for $l \neq t $
\begin{align*}
B_{lt} = q m_l m_t (A_{l} b_{tl}  + A_{t} b_{lt} ), 
\end{align*}
which is equal to zero by Lemma \ref{lemmaiden}.

Let us now consider the diagonal entries of $B$. We have by Lemmas \ref{derevative of F}, \ref{lemmaiden}
\begin{align*}
B_{ll} &= r m_l A_l b_l + 4 ( 2 s m_l + q m_l (m_l -1) ) A_l \tilde{b}_l + q \sum_{\substack{ k=1 \\ k \neq l}}^n m_k m_l (A_l b_{lk} + A_k b_{kl}) \\
&= 2 r  m_l \cosh^2 x_l + 4( 2 s m_l + q m_l(m_l -1) ) \cosh 2 x_l  + 2q \sum_{\substack{ k=1 \\ k \neq l}}^n m_k m_l ( \cosh 2 x_k + \cosh 2 x_l).
\end{align*}
Then

\begin{align*}
B_{ll}&= 2 r  m_l \cosh^2 x_l + 4( 2 s m_l + qm_l(m_l -1) ) \cosh 2 x_l  + 2q (N-2m_l) m_l \cosh 2x_l \\&+ 2q m_l \sum_{k=1}^n m_k \cosh 2 x_k\\
&= m_l \Big(  \big( r + 8s + 2q (N-2) \big) \cosh 2 x_l + 2q \sum_{k=1}^n m_k \cosh 2x_k + r \Big),
\end{align*}
which implies the statement. 
\end{proof}
Below summation over repeated indices will be assumed. Let us now assume that multiplicities $m_{i}=1$ for all $i=1,...,n.$ For any vector $v=(v_1, \dots, v_n) \in V $ let us introduce the vector field $\partial_v = v_i \partial_{x_i} \in TV$. For any $u=(u_1, \dots, u_n) \in V$ we define the following multiplication on the tangent plane $T_{x}V$ for generic $x \in V \colon$
\begin{equation} \label{algebra}
\partial_{u} \ast \partial_{v}= F_{ijk} u_{i}v_{j} \partial_{x_{k}}.
\end{equation}
Note that multiplication \eqref{algebra} defines a commutative algebra on $T_xV$.
 The following theorem takes place.
\begin{theorem} \label{assoiativity&WDVV} 
Suppose that parameters $r$, $s$ and $q$ satisfy the linear relation (\ref{linearrelationmult}).
Then function 
\begin{align}\label{prepBCN}
F=r \sum_{i=1}^{n}f(x_{i})+s\sum_{i=1}^{n}f(2x_{i})+q\sum_{1\leq i<j\leq n}\big(f(x_{i}+x_{j})+f(x_{i}-x_{j})\big)
\end{align}
satisfies WDVV equations \eqref{GDVV1} where $B$ is determined by (\ref{Bmatrix}) and (\ref{Ak}). Also, multiplication \eqref{algebra} is associative. 
\end{theorem}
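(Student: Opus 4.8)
The plan is to reduce the WDVV equations in the form \eqref{GDVV1} to a verifiable identity using the structure already established. Since Proposition \ref{B undermultconditions} (with $\underline m = (1,\dots,1)$, so $N=n$) gives that $B$ is diagonal, $B_{lt} = h(x)\,\delta_{lt}$ with $h(x) = 2q\sum_k \cosh 2x_k + r$, the matrix $B^{-1}$ is simply $h(x)^{-1} I$. Therefore $F_i B^{-1} F_j = h(x)^{-1} F_i F_j$, and the equations \eqref{GDVV1} become $F_i F_j = F_j F_i$ for all $i,j$, i.e.\ the matrices $F_1,\dots,F_n$ pairwise commute. Equivalently, by the observation after \eqref{algebra}, this is precisely the associativity of the product $\ast$, since $(\partial_u \ast \partial_v)\ast \partial_w$ has components $F_{ijp}F_{pkq}u_iv_jw_k$ and commutativity of the $F_i$'s is exactly the symmetry needed. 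So I would first state this reduction explicitly: it suffices to prove $[F_i, F_j] = 0$ for all $i,j$.

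Next I would compute $[F_i,F_j]$ entrywise. Using Lemma \ref{derevative of F} with all $m_k = 1$, the entries $F_{klt}$ are sums of the diagonal-type terms ($b_k$, $\widetilde b_k$, $b_{kj}$ pieces carrying $\delta_{kl}\delta_{lt}$) plus the three ``off-diagonal'' families $q\, b_{tk}\delta_{kl}$, $q\, b_{lk}\delta_{kt}$, $q\, b_{kl}\delta_{lt}$. I would expand $(F_i F_j)_{lt} = \sum_p F_{ilp} F_{jpt}$ and antisymmetrize in $i \leftrightarrow j$. The purely diagonal parts of $F_i$ contribute a diagonal matrix, and products of two diagonal parts commute trivially; the genuine work is in the cross terms between the diagonal part and the $b_{kl}$-type parts, and among the $b_{kl}$-type parts themselves. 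This is where the trigonometric identities must be invoked: the key functional identities are the standard three-term relations for $\coth$, namely $\coth a \coth b + \coth b \coth c + \coth c \coth a = 1$ when $a+b+c = 0$, applied to arguments drawn from $\{x_i \pm x_j, x_j \pm x_k, x_i \pm x_k, 2x_i, 2x_j, 2x_k\}$, together with the degenerate versions giving $b_i$, $\widetilde b_i$. One also uses $\coth^2 z - 1 = \mathrm{csch}^2 z$ where second-derivative-type terms would appear, but since we only differentiate $F$ three times here everything stays at the level of $\coth$ products.

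Concretely I would organize the cancellation by the index pattern of the nonzero contributions to $(F_iF_j - F_jF_i)_{lt}$. For fixed $i \ne j$, a nonzero entry requires $\{l,t\}$ to overlap $\{i,j\}$ in one of a few ways; I would treat the cases $l=t=i$, $l = i, t = j$, $l=t \notin\{i,j\}$, $l \in \{i,j\}, t \notin \{i,j\}$, etc., separately. In each case the surviving sum over the contracted index $p$ collapses to a handful of terms, and the vanishing follows from one application of the $\coth$ three-term identity (this is the trigonometric analogue of how the rational $\vee$-system identity underlies \eqref{rational.solution}). The linear relation \eqref{linearrelationmult} enters only through Proposition \ref{B undermultconditions} to make $B$ scalar; once we are reduced to $[F_i,F_j]=0$, it plays no further role — indeed the commutativity of the $F_i$ is expected to hold without any constraint on $r,s,q$, reflecting that these are $\vee$-system-type conditions. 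Finally, having established $[F_i,F_j] = 0$ and invertibility of $B$ (which holds for generic $x$ since $h(x) \not\equiv 0$), I conclude both \eqref{GDVV1} and, reading the identity $F_{ijp}F_{pkl} = F_{kjp}F_{pil}$ symmetrically, the associativity of $\ast$.

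I expect the main obstacle to be purely bookkeeping: keeping track of which Kronecker-delta combinations survive in the double sum $\sum_p F_{ilp}F_{jpt}$ and matching them against the correct instance of the $\coth$ addition identity, especially in the mixed cases where one factor is a genuine diagonal term (involving $b_l$ or $\widetilde b_l$) and the other is a $b_{kl}$ term — there the identity needed is the degenerate one relating $\coth(x_i - x_j)$, $\coth(x_i + x_j)$ and $\coth 2x_i$ (or $\coth x_i$), and sign errors are easy to make. No conceptual difficulty is anticipated beyond this; the trigonometric identities of Lemma \ref{lemmaiden} already illustrate the flavour of the computation, and the full proof is an elaboration of the same manipulations carried to the level of products.
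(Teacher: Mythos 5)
Your reduction is the same as the paper's: by Proposition \ref{B undermultconditions} the matrix $B$ is the scalar matrix $h(x)I$, so \eqref{GDVV1} is equivalent to $[F_i,F_j]=0$, which is also exactly associativity of $\ast$. The paper then simply cites \cite{AF} for the commutativity of the $F_i$'s under the relation \eqref{linearrelationmult}, whereas you propose to verify it by a direct $\coth$ computation. That is a legitimate route in principle, but your plan contains a genuine error that would derail the computation.

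The error is the claim that ``the linear relation \eqref{linearrelationmult} enters only through Proposition \ref{B undermultconditions} \dots\ the commutativity of the $F_i$ is expected to hold without any constraint on $r,s,q$.'' This is false, and it is precisely the point where the trigonometric case differs from the rational $\vee$-system picture you are invoking. The three-term identity you quote, $\coth a\coth b+\coth b\coth c+\coth c\coth a=1$ for $a+b+c=0$, has a \emph{nonzero} right-hand side (unlike its rational analogue $\tfrac1{ab}+\tfrac1{bc}+\tfrac1{ca}=0$), so every cancellation in $[F_i,F_j]$ leaves behind a constant residue, and these residues cancel only when $r+8s+2q(N-2)=0$. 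Already for $n=2$ one computes, with $b_{12}=\coth(x_1+x_2)+\coth(x_1-x_2)$ and $b_{21}=\coth(x_1+x_2)-\coth(x_1-x_2)$,
\begin{equation*}
(F_1F_2-F_2F_1)_{12}=q\bigl(r(b_1b_{12}+b_2b_{21})+8s(\widetilde b_1 b_{12}+\widetilde b_2 b_{21})\bigr)=2q(r+8s),
\end{equation*}
using the exact identities $b_1b_{12}+b_2b_{21}=2$ and $\widetilde b_1 b_{12}+\widetilde b_2 b_{21}=2$ (the $q^2$ terms cancel identically). This vanishes only when $r=-8s$, which is \eqref{linearrelationmult} for $N=n=2$; for general $n$ the $b_{ij}b_{jk}$ cross terms contribute the additional constant $2q(N-2)$. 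So the constraint is not merely a device to make $B$ scalar --- it is exactly the solvability condition for $[F_i,F_j]=0$, and a computation organized around the expectation that everything cancels unconditionally would fail. Once this is corrected, the rest of your bookkeeping plan is sound, though carrying it out in full is essentially reproving the result of \cite{AF} that the paper takes as input.
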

\begin{proof}
It has been shown in \cite{AF} that the function \eqref{prepBCN} satisfies the following system of equations if the linear relation \eqref{linearrelationmult} holds:
\begin{align}\label{FF=FF}
F_i F_j = F_j F_i, 
\end{align}
for all $i,j=1, \dots, n$. It then follows from Proposition \ref{B undermultconditions} that conditions \eqref{FF=FF} are equivalent to WDVV equations (\ref{GDVV1}) since the matirx $B$ is proportional to the identity matrix.
Also it is easy to see that associativity of the multiplication  \eqref{algebra} is equivalent to the relation (\ref{FF=FF}).
\end{proof}
In the remaining part of the paper we prove generalization of Theorem \ref{assoiativity&WDVV} to the configuration $BC_n(q, r, s ; \underline{m}),$ that is to the case of arbitrary multiplicities $m_{i}.$ This generalization can be formulated as follows. 
\begin{theorem} \label{restrictFandWDVV}
Suppose parameters $r,s,q$ and $\underline{m}$ satisfy the relation
\begin{align}\label{rsqrelation}
r=-8s-2q(N-2),
\end{align}
 where $N=\sum_{i=1}^{n}m_{i}.$ 
Then prepotential (\ref{restrictedpotential}) satisfies WDVV equations (\ref{GDVV1}) where $B=\sum_{i=1}^{n}\sinh 2x_{i}{F_{i}}.$
\end{theorem}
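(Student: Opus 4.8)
The plan is to deduce Theorem~\ref{restrictFandWDVV} from the $m_i\equiv 1$ case (Theorem~\ref{assoiativity&WDVV}) by realizing the configuration $BC_n(r,s,q;\underline m)$ as a \emph{restriction} of the larger invariant configuration $BC_N(r,s,q)$ to a suitable subspace, where $N=\sum_i m_i$. Concretely, I would group $N$ coordinates $y_1,\dots,y_N$ into $n$ blocks, the $k$-th block having $m_k$ coordinates, and restrict the prepotential \eqref{prepBCN} (with $n$ replaced by $N$) to the locus where all coordinates inside the $k$-th block coincide and equal $x_k$. On this locus $f(y_a\pm y_b)$ collapses: pairs within the same block give $f(0)$ or $f(2x_k)$-type contributions, and pairs across blocks $k\neq l$ produce $m_km_l$ copies of $f(x_k\pm x_l)$; the terms $f(y_a)$ give $m_k$ copies of $f(x_k)$ and $f(2y_a)$ give $m_k$ copies of $f(2x_k)$. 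Tracking the extra $f(2x_k)$ contributions coming from within-block pairs $y_a-y_b\to 0$ and $y_a+y_b\to 2x_k$ is exactly what generates the multiplicity $sm_k+\tfrac12 q m_k(m_k-1)$ for the vector $2e_k$ in \eqref{restrictedpotential}, so the restricted function is precisely the prepotential \eqref{restrictedpotential} up to an additive constant (which is irrelevant for third derivatives).

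The second ingredient is that WDVV solutions of the form \eqref{trigonometric.solution} with $Q=0$ are stable under this kind of restriction. I would invoke the restriction theorem for trigonometric $\vee$-systems / WDVV solutions from \cite{Misha & Maali} (cf. \cite{Misha&Veslov 2008}): if $F$ satisfies WDVV with the specific $B$ built from $e=\sum\sinh(2y_a)\partial_{y_a}$, then its restriction to the subspace $\{y_a=x_k \text{ within block }k\}$ again satisfies WDVV, with the restricted Euler-type vector field. Here one must check that the vector field $e=\sum_a \sinh(2y_a)\partial_{y_a}$ is tangent to the restriction subspace and descends to $\sum_k \sinh(2x_k)\partial_{x_k}$ — which it manifestly does, since setting $y_a=x_k$ for all $a$ in block $k$ is compatible with the flow. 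This is what matches the matrix $B=\sum_i\sinh(2x_i)F_i$ in the statement.

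Alternatively — and this is probably the cleaner route to write out — I would bypass the general restriction machinery and prove it directly, mirroring the $m_i=1$ proof: first establish \eqref{FF=FF}, i.e.\ $F_iF_j=F_jF_i$ for the prepotential \eqref{restrictedpotential}, and then apply Proposition~\ref{B undermultconditions}. Indeed, Proposition~\ref{B undermultconditions} already shows that under \eqref{rsqrelation} the matrix $B$ is diagonal with $B_{lt}=m_l h(x)\delta_{lt}$, where $h(x)=2q\sum_k m_k\cosh 2x_k+r$. Then for any $i,j$ one computes
\begin{align*}
F_i B^{-1} F_j = \frac{1}{h(x)} F_i \,\mathrm{diag}(m_1^{-1},\dots,m_n^{-1})\, F_j,
\end{align*}
so the WDVV equations \eqref{GDVV1} become $F_i D F_j = F_j D F_i$ with $D=\mathrm{diag}(m_l^{-1})$. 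Absorbing the $m_l$ into the structure, this is equivalent to $\widetilde F_i \widetilde F_j=\widetilde F_j\widetilde F_i$ for appropriately normalized matrices, i.e.\ to a commutativity relation of the form \eqref{FF=FF} for the configuration with multiplicities. So the task reduces to proving the commutativity relations \eqref{FF=FF} for \eqref{restrictedpotential}.

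The main obstacle is precisely this commutativity step: verifying $[F_i D, F_j D]=0$ (equivalently, the associativity of the multiplication \eqref{algebra} twisted by $D$) for the multiplicity-dependent prepotential. I expect to handle it by the restriction argument above — deriving it from the already-known invariant identity \eqref{FF=FF} for $BC_N(r,s,q)$ via the block-collapse $y_a=x_k$, checking that the relation $r=-8s-2q(N-2)$ with $N=\sum m_k$ is exactly the image of the invariant relation $r=-8s-2q(N-2)$ in $N$ dimensions — or, failing a clean citation, by a direct but lengthy computation of the matrix products using the explicit $F_{klt}$ from Lemma~\ref{derevative of F} together with the cotangent identities of Lemma~\ref{lemmaiden} (and further trigonometric identities for triple products of $\coth$). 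The bookkeeping of which $\delta$-patterns survive in $F_iF_j$ is the delicate part; organizing it by the coincidence pattern of the indices $i,j$ and the summation index, exactly as in the proof of Lemma~\ref{derevative of F}, is the way I would keep it under control.
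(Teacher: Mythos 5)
Your primary route---realizing $BC_n(r,s,q;\underline m)$ as the block-collapse restriction of the invariant configuration $BC_N(r,s,q)$, $N=\sum_k m_k$, and transporting the associativity of the induced multiplication down to the subspace---is exactly the paper's proof (Section~3), including the bookkeeping that produces the multiplicity $sm_k+\tfrac12 qm_k(m_k-1)$ for $2e_k$ and the identification of the restricted metric with $\sum_i\sinh(2x_i)F_i$ via Proposition~\ref{B undermultconditions}. The only points you leave open are ones the paper supplies explicitly rather than cites: the closedness of the tangent space under the limiting product is established by a reflection-symmetry identity (Lemma~\ref{closedidentity} and Proposition~\ref{closedalgebra}), and the case of non-integer $m_i\in\mathbb{C}$ permitted in the statement is deduced at the end from the integer case by polynomiality in $\underline m$.
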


\begin{remark}
Theorem \ref{restrictFandWDVV} generalizes Theorem 2.3 from \cite{Hoevenaars+Martini 2003}. In this case we have all $m_{i}=1$ and $s=0.$ Then putting $q=1$ we get the standard $B_{N}$ root system and the condition (\ref{rsqrelation}) reduces to $r=-2(N-2)$ which is the multiplicity of the short root of $B_{N}$ root system considered in \cite{Hoevenaars+Martini 2003}. 
\end{remark}
\begin{remark}
In the rational limit solutions (\ref{restrictedpotential}) of WDVV equations reduce to $B_{n}$ family of $\vee$-system found in \cite{Chalykh+ Veselov 2001}.
\end{remark}

\section{Proof through restrictions}

Let $\mathcal{A}$ be the configuration $\mathcal{A}=BC_N(r,s,q)\subset W \cong \mathbb{C}^N, N \in \mathbb{N}$.
Let $e_1,\dots,e_N$ be the standard basis of $W.$ Let $(\cdot,\cdot)$ be the standard inner product which is defined by 
$$(x,y)= \sum_{i=1}^{N}x_{i}y_{i},$$
where $x=(x_{1},\dots,x_{N}),y=(y_{1},\dots,y_{N}) \in W.$ 
Let $n\in\mathbb{N}$ and $\underline{m}=(m_1, \dots, m_n)$ with $m_i \in \mathbb{N}$ such that $\sum_{i=1}^n m_i = N$. 
Let us consider subsystem $\mathcal{B}\subset \mathcal{A}$ as follows:
$$\mathcal{B}=\{e_{\sum_{j=1}^{i-1}m_{j}+k}-e_{\sum_{j=1}^{i-1}m_{j}+l},\quad 1\leq k<l\leq m_{i}, i=1,\dots,n\}.$$
Now let us consider the corresponding subspace of $W$ of dimension $n$ given by
$$W_{\mathcal{B}}=\{x\in W:(\beta,x)=0, \forall \beta\in \mathcal{B} \}.$$
More explicitly, vectors $x=(x_{1},\dots,x_{N})\in W_{\mathcal{B}}$ satisfy conditions:
\begin{numcases}{}
x_{1}=\dots =x_{m_{1}},\nonumber\\
x_{m_{1}+1}=\dots =x_{m_{1}+m_{2}},\nonumber \\
\vdots \nonumber \\
x_{m_{1}+m_{2}+\dots+m_{n-1}+1}=\dots =x_{N}.\label{W_B}
\end{numcases}
For any vector $v=(v_1, \dots, v_N) \in W $ let us define the vector field $\partial_v = v_i \partial_{x_i} \in TW$. For any $u=(u_1, \dots, u_N) \in W$ we define the following multiplication on the tangent plane $T_{x}W$ for generic $x \in W \colon$
\begin{equation} \label{algebra on TW}
\partial_{u} \ast \partial_{v}=u_{i}v_{j} F_{ijk} \partial_{x_{k}},
\end{equation}
where the function $F$ is given by
\begin{align}\label{prepBCN 2}
F= \sum_{\alpha \in \mathcal{A}} c_\alpha f((\alpha,x)). 
\end{align}
Assume that parameters $r,s,q$ and $\underline{m}$ satisfy the relation
$r=-8s-2q(N-2).$ Then multiplication (\ref{algebra on TW}) is associative by Theorem \ref{assoiativity&WDVV} (applied with $n=N$). 
Note that function (\ref{prepBCN 2}) satisfies
$$F_{ijk}=\sum_{\alpha\in\mathcal{A}}c_{\alpha}\alpha_{i}\alpha_{j}\alpha_{k}\coth(\alpha,x),$$
hence multiplication \eqref{algebra on TW} can be expressed as follows:
\begin{equation}\label{multiplicationformula}
\partial_{u}\ast \partial_{v}= \sum_{\alpha \in \mathcal{A}}c_{\alpha} (\alpha,u)(\alpha,v)\coth (\alpha,x)\partial_{\alpha}.
\end{equation}
If we identify $W$ with $T_{x}W \cong W,$ then multiplication (\ref{multiplicationformula}) takes the form
\begin{equation}\label{a*b}
u\ast v= \sum_{\alpha \in \mathcal{A}}c_{\alpha} (\alpha,u)(\alpha,v)\coth (\alpha,x)\alpha. 
\end{equation}
Define $M_{\mathcal{B}}=W_{\mathcal{B}} \setminus \bigcup_{\alpha \in \mathcal{A}\setminus \mathcal{B}} \Pi_{\alpha},$ 
where $ \Pi_{\alpha}= \{ x \in W \colon (\alpha,x)=0 \} .$
Consider now a point $x_{0} \in M_{\mathcal{B}}$ and two tangent vectors $u_{0}, v_{0} \in T_{x_{0}}M_{\mathcal{B}}.$ We extend vectors $u_{0}$ and $v_{0}$ to two local analytic vector fields $u(x), v(x)$ in the neighbourhood $U$ of $x_{0}$ that are tangent to the subspace $W_{\mathcal{B}} $ at any point $x \in M_{\mathcal{B}} \cap U$ such that $u_{0}=u(x_{0})$ and  $v_{0}=v(x_{0})$. Now we want to study the limit of $u(x) \ast v(x)$ when $x$ tends to $x_{0}.$
The limit may have singularities at $x \in W_{\mathcal{B}} $ as $\coth (\alpha,x)$ with $\alpha \in \mathcal{B} $ is not defined for such $x.$ Also we note that outside $W_{\mathcal{B}}$ we have a well-defined multiplication $u(x) \ast v(x).$ Similarly to the rational case considered in \cite{Feigin+Veselov 2007} and trigonometric case with extra variable \cite{Misha & Maali} the following lemma holds.
\begin{lemma}\label{welldefalgebra}
The product $u(x) \ast v(x)$ has a limit when  $x$ tends to $x_{0}\in  M_{\mathcal{B}} $ given by
\begin{equation} \label{productlimit}
u_{0} \ast v_{0} =\sum_{\alpha \in \mathcal{A}\setminus \mathcal{B}} c_{\alpha}(\alpha,u_{0})(\alpha,v_{0})\coth (\alpha ,x_{0})\alpha.
\end{equation}
In particular $u_{0} \ast v_{0}$ is determined by $u_{0}$ and $v_{0}$ only. 
\end{lemma}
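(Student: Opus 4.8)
The plan is to split the sum \eqref{a*b} defining $u(x)\ast v(x)$ into the part indexed by $\alpha\in\mathcal{B}$ and the part indexed by $\alpha\in\mathcal{A}\setminus\mathcal{B}$. The second part is manifestly regular as $x\to x_0$, since $(\alpha,x_0)\neq 0$ for all $\alpha\in\mathcal{A}\setminus\mathcal{B}$ by the definition of $M_{\mathcal{B}}$, and it converges to the right-hand side of \eqref{productlimit}. So the whole content of the lemma is to show that the dangerous part,
\begin{equation*}
S_{\mathcal{B}}(x)=\sum_{\alpha\in\mathcal{B}}c_{\alpha}(\alpha,u(x))(\alpha,v(x))\coth(\alpha,x)\,\alpha,
\end{equation*}
tends to $0$ as $x\to x_0$. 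Here $\coth(\alpha,x)\to\infty$ for $\alpha\in\mathcal{B}$, so I need the prefactors $(\alpha,u(x))(\alpha,v(x))$ to vanish fast enough. The key observation is that $u(x),v(x)$ are tangent to $W_{\mathcal{B}}$ along $M_{\mathcal{B}}\cap U$; since $W_{\mathcal{B}}$ is the common zero locus of the linear forms $(\beta,\cdot)$, $\beta\in\mathcal{B}$, tangency means $(\beta,u(x))=(\beta,v(x))=0$ for $x\in W_{\mathcal{B}}\cap U$. But $W_{\mathcal{B}}$ is exactly $\{(\alpha,x)=0\}$ intersected appropriately — more precisely, for $\alpha\in\mathcal{B}$ the function $x\mapsto(\alpha,x)$ vanishes on $W_{\mathcal{B}}$, hence so does $(\alpha,u(x))$; therefore along the normal directions to $W_{\mathcal{B}}$ the form $(\alpha,u(x))$ vanishes to first order in the same linear coordinates that make $\coth(\alpha,x)$ blow up to first order.

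Concretely, I would introduce linear coordinates adapted to the splitting $W=W_{\mathcal{B}}\oplus W_{\mathcal{B}}^{\perp}$, writing $x=x'+x''$ with $x'\in W_{\mathcal{B}}$, $x''\in W_{\mathcal{B}}^{\perp}$, and $x_0\in W_{\mathcal{B}}$ corresponding to $x''=0$. For $\alpha\in\mathcal{B}$ one has $(\alpha,x)=(\alpha,x'')$, a linear function of $x''$ alone, so $\coth(\alpha,x)=O(|x''|^{-1})$. On the other hand $(\alpha,u(x))$ vanishes whenever $x''=0$ (that is the tangency condition), and $u$ is analytic, so $(\alpha,u(x))=O(|x''|)$; the same for $v$. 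Hence each summand of $S_{\mathcal{B}}(x)$ is $O(|x''|^{2})\cdot O(|x''|^{-1})=O(|x''|)\to 0$. This gives $S_{\mathcal{B}}(x)\to 0$ and therefore the stated limit \eqref{productlimit}. Since the right-hand side of \eqref{productlimit} depends only on $u_0=u(x_0)$ and $v_0=v(x_0)$ and not on the chosen extensions, the final assertion follows at once.

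The main obstacle — really the only subtlety — is making the order-of-vanishing bookkeeping precise: one must be careful that "tangent to $W_{\mathcal{B}}$ at points of $M_{\mathcal{B}}\cap U$" (an open dense subset of $W_{\mathcal{B}}\cap U$) indeed forces $(\alpha,u(x))$ to vanish identically on all of $W_{\mathcal{B}}\cap U$, which it does by analyticity/continuity, and that no cancellation is needed because each term is individually controlled. A secondary point is to note that $u(x)\ast v(x)$ is genuinely well defined (not just formally) for $x\in U$ off $W_{\mathcal{B}}$ and off the hyperplanes $\Pi_\alpha$, $\alpha\in\mathcal{A}\setminus\mathcal{B}$, so that taking the limit along such $x$ is legitimate; this is immediate from \eqref{a*b}. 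The argument parallels the rational case in \cite{Feigin+Veselov 2007} and the trigonometric-with-extra-variable case in \cite{Misha & Maali}, and no new ideas beyond the coordinate splitting are required.
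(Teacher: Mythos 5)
The paper states Lemma \ref{welldefalgebra} without proof, only pointing to the analogous rational and trigonometric-with-extra-variable cases, so you are genuinely supplying a missing argument, and your overall strategy (isolate the $\alpha\in\mathcal{B}$ terms of \eqref{a*b} and use tangency of $u(x),v(x)$ to $W_{\mathcal{B}}$ to kill them) is certainly the intended one. But the order-of-vanishing bookkeeping, which you correctly identify as the only real content, contains a genuine error. Writing $x=x'+x''$ with $x''\in W_{\mathcal{B}}^{\perp}$, you claim $\coth(\alpha,x)=O(|x''|^{-1})$. This is false whenever $\operatorname{codim}W_{\mathcal{B}}=N-n\geq 2$, i.e.\ for all but the simplest $\underline{m}$: one only has $\coth(\alpha,x)=O(|(\alpha,x'')|^{-1})$, and the single linear form $(\alpha,x'')$ vanishes on a hyperplane of $W_{\mathcal{B}}^{\perp}$ and so can be arbitrarily small compared with $|x''|$. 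Likewise, tangency gives $(\alpha,u(x))=\sum_{\beta\in\mathcal{B}}\lambda_{\alpha\beta}(x)(\beta,x)$, a combination of \emph{all} the forms cutting out $W_{\mathcal{B}}$, not a multiple of $(\alpha,x)$ itself; so a dangerous term contains pieces of the shape $(\beta,x)(\gamma,x)/(\alpha,x)$ with $\beta,\gamma\neq\alpha$, which do not tend to zero along every path. Concretely, take $m_1=m_2=2$, so $e_1-e_2,\ e_3-e_4\in\mathcal{B}$, and the admissible extension $u(x)=v(x)=u_0+(x_3-x_4)e_1$ (it equals $u_0\in W_{\mathcal{B}}$ on $W_{\mathcal{B}}$). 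The $\alpha=e_1-e_2$ term of the singular part is $q\,(x_3-x_4)^2\coth(x_1-x_2)\,(e_1-e_2)$, there is nothing to cancel it, and it blows up along the curve $x_1-x_2=(x_3-x_4)^3$. So both your termwise estimate and the unrestricted two-sided limit fail for such extensions.

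The statement therefore has to be read, as in the cited sources, with a restricted mode of approach, and a correct proof must say so. Two standard repairs: (i) let $x\to x_0$ along straight lines, or along analytic curves on which each $(\alpha,x(t))$, $\alpha\in\mathcal{B}$, vanishes to order exactly one at $t=0$; then $(\alpha,u(x(t)))(\alpha,v(x(t)))=O(t^2)$ while $\coth(\alpha,x(t))=O(t^{-1})$, and your computation goes through termwise; or (ii) use the constant extensions $u(x)\equiv u_0$, $v(x)\equiv v_0$, for which $(\alpha,u(x))\equiv(\alpha,u_0)=0$ for all $\alpha\in\mathcal{B}$, so the singular part vanishes identically and the limit exists without restriction and equals \eqref{productlimit}. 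Either way the value is the right-hand side of \eqref{productlimit} and depends only on $u_0,v_0$, which is all the rest of the paper uses; but the step $O(|x''|^{2})\cdot O(|x''|^{-1})\to 0$ as you wrote it does not hold.
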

Now we are going to show that the product $u_{0} \ast v_{0}$ belongs to $T_{x_{0}}M_{\mathcal{B}}$. We will need the following lemma (cf. \cite{Feigin+Veselov 2007}, \cite{Misha & Maali}).
\begin{lemma}\label{closedidentity}
Let $\alpha \in \mathcal{A}.$ Let $x\in \Pi_{\alpha}$ be generic. Then the identity 
\begin{equation}\label{rootsysidentity}
 \sum_{\substack{\beta \in \mathcal{A}\\ \beta \nsim \alpha}} c_{\beta}(\alpha,\beta)\coth (\beta,x) B_{\alpha,\beta}(a,b)\alpha \wedge\beta=0
\end{equation} 
holds for all $a,b\in V$ provided that $(\alpha,x)=0,$ where
$B_{\alpha,\beta}(a,b)=(\alpha,a)(\beta,b)-(\alpha,b)(\beta,a)$ and $\alpha \wedge\beta=\alpha\otimes \beta-\beta\otimes\alpha.$   
\end{lemma}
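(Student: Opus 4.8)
The plan is to prove Lemma~\ref{closedidentity} by reducing it to a known two-dimensional identity for the dihedral subsystems of the $BC_N$ root system, following the strategy used in the rational case \cite{Feigin+Veselov 2007} and the trigonometric-with-extra-variable case \cite{Misha & Maali}. First I would fix a root $\alpha \in \mathcal{A}$ and a generic $x \in \Pi_\alpha$, and observe that the sum in \eqref{rootsysidentity} runs over roots $\beta$ not proportional to $\alpha$. The key structural fact is that the vectors $\alpha \wedge \beta$ only depend on the two-dimensional plane spanned by $\alpha$ and $\beta$ (up to scalar), so I would group the roots $\beta$ according to the two-plane $\pi = \langle \alpha, \beta\rangle$ they span together with $\alpha$. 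For each such plane, the roots of $\mathcal{A}$ lying in $\pi$ form a rank-two subsystem, which for $BC_N$ is one of $A_1 \times A_1$, $A_2$, $B_2=C_2$, or (in degenerate configurations) $BC_2$; the contribution to \eqref{rootsysidentity} from each plane should vanish separately.

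Next I would carry out the planar computation. Within a fixed plane $\pi$, write $B_{\alpha,\beta}(a,b) = (\alpha,a)(\beta,b) - (\alpha,b)(\beta,a)$; since all relevant vectors lie in $\pi$, this is (up to the fixed bivector $\alpha \wedge \beta$-type factor) essentially the determinant pairing on $\pi$, and one can reduce to checking the identity with $a, b$ running over a basis of $\pi$. The condition $(\alpha,x)=0$ with $x$ generic means $x$ restricted to $\pi$ is a fixed nonzero vector orthogonal to $\alpha$ in $\pi$, so $(\beta,x)$ is determined by the angle between $\beta$ and $\alpha$. The identity to verify then becomes a trigonometric identity of the shape $\sum_{\beta} c_\beta (\alpha,\beta)\coth(\beta,x)\,[\ldots] = 0$ summed over the (finitely many) roots $\beta \in \pi$, where the bracket encodes the $B_{\alpha,\beta}$ and $\alpha\wedge\beta$ data. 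For the $B_2$ case this is the same identity already appearing in \cite{AF} / \cite{Hoevenaars+Martini 2003}-type computations and can be checked directly using $\coth$ addition formulas; the $A_1\times A_1$ and $A_2$ cases are easier.

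The main obstacle I anticipate is bookkeeping: correctly enumerating, for a given short root (type $e_i$), long root (type $2e_i$), or middle root (type $e_i\pm e_j$) of $BC_N$, all roots $\beta$ coplanar with it, together with their multiplicities $c_\beta$ from the configuration $BC_N(r,s,q)$, and verifying that the weighted trigonometric sum collapses. In particular the planes containing a root of type $e_i$ also contain $2e_i$ and possibly $e_i\pm e_j$, $e_j$, $2e_j$, so the rank-two subsystem can be a full $BC_2$, and one must use the specific multiplicities $r m_i$, $s m_i + \tfrac12 q m_i(m_i-1)$, $q m_i m_j$ — here however we are at the level of $\mathcal{A}=BC_N(r,s,q)$ with all multiplicities $1$, so $c_\alpha \in \{r, s, q\}$. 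A secondary subtlety is that $\alpha$ itself is excluded only up to proportionality, so when $\alpha = e_i$ the root $2e_i$ is still in the sum (it is not proportional-excluded in the sense $\beta \nsim \alpha$ — one must double-check the precise meaning of $\nsim$ and whether $e_i$ and $2e_i$ count as proportional here); getting this convention right is essential for the cancellations to work. Once the planar identities are established, summing over all planes gives \eqref{rootsysidentity} and completes the proof.
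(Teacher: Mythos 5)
Your plane-by-plane strategy could be made to work, but it is a genuinely different and much heavier route than the paper's, and the step you defer to the end --- actually verifying the planar identities --- is where all the content would sit, so as written the argument is not complete. The paper's proof is a two-line pairing: for each $\beta \nsim \alpha$ set $\gamma = s_\alpha \beta$, the reflection of $\beta$ in $\Pi_\alpha$. Since $\gamma$ differs from $\beta$ by a multiple of $\alpha$, on $(\alpha,x)=0$ one has $\coth(\gamma,x)=\coth(\beta,x)$, $B_{\alpha,\gamma}(a,b)=B_{\alpha,\beta}(a,b)$ and $\alpha\wedge\gamma=\alpha\wedge\beta$, while $(\alpha,\gamma)=-(\alpha,\beta)$; moreover $c_\gamma=c_\beta$ because either $\gamma$ or $-\gamma$ lies in $\mathcal{A}$ and the multiplicities of $BC_N(r,s,q)$ are Weyl-invariant. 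Hence the $\beta$- and $\gamma$-terms of \eqref{rootsysidentity} cancel exactly (the signs work out identically when $-\gamma\in\mathcal{A}$ rather than $\gamma$), and the sum vanishes term by term in pairs. This means your planar identities are true, but for a trivial reason --- the involution $\beta\mapsto s_\alpha\beta$ preserves each two-plane through $\alpha$ --- and not because of any nontrivial trigonometric collapse: no $\coth$ addition formulas, no relation among $r,s,q$, and no enumeration of the rank-two subsystems $A_1\times A_1$, $A_2$, $B_2$, $BC_2$ is needed. Anticipating that the $B_2$-type identities of \cite{AF} or the specific multiplicity values enter here is a red herring; those are used in Theorem~\ref{assoiativity&WDVV}, whereas this lemma needs only Weyl invariance. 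Your reading of $\nsim$ is the correct one: $2e_i\sim e_i$, so proportional roots are excluded from the sum (for such $\beta$ one would have $\alpha\wedge\beta=0$ but $\coth(\beta,x)$ undefined on $\Pi_\alpha$). If you do pursue the planar route, you should at least record that within each plane the roots not proportional to $\alpha$ come in $s_\alpha$-symmetric pairs with equal multiplicities, at which point you have rediscovered the pairing argument.
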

\begin{proof}
For any $\beta \in \mathcal{A}$ such that $\beta \nsim \alpha$ let $\gamma=s_{\alpha}\beta$, where $s_{\alpha}$ is the orthogonal reflection about $\Pi_{\alpha}.$
Note that $\coth (\gamma,x)= \coth(\beta,x)$ at $(\alpha,x)=0.$
Also note that
\begin{align*}
(\alpha,\gamma)=-(\alpha,\beta),\quad B_{\alpha,\gamma}(a,b)=B_{\alpha,\beta}(a,b),\quad \alpha\wedge \gamma = \alpha\wedge\beta.
\end{align*}
We have that either $\gamma$ or $-\gamma$ is an element of $\mathcal{A}.$ Suppose firstly that $\gamma\in \mathcal{A}.$
Then 
\begin{align}\label{pair of terms}
c_{\beta}(\alpha,\beta)\coth (\beta,x) B_{\alpha,\beta}(a,b)\alpha \wedge\beta +c_{\gamma}(\alpha,\gamma)\coth (\gamma,x)B_{\alpha,\gamma}(a,b)\alpha\wedge \gamma=0
\end{align}
at $(\alpha,x)=0$ since multiplicities are $B_{N}$-invariant. If one replaces $\gamma$ with $-\gamma$ then (\ref{pair of terms}) holds as well.  
\end{proof}
\begin{proposition}\label{closedalgebra}
Let $u,v \in T_{x}M_{\mathcal{B}}$ where $x \in M_{\mathcal{B}}.$ Then $u \ast v \in T_{x}M_{\mathcal{B}}.$ 
\end{proposition}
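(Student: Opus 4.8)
The goal is to show that the multiplication $\ast$ restricted to the configuration $\mathcal{A}\setminus\mathcal{B}$, as given by the limit formula \eqref{productlimit} in Lemma \ref{welldefalgebra}, preserves the tangent space $T_x M_{\mathcal{B}}$. Since $M_{\mathcal{B}}\subset W_{\mathcal{B}}$ and $W_{\mathcal{B}}$ is a linear subspace, $T_x M_{\mathcal{B}}\cong W_{\mathcal{B}}$, and $W_{\mathcal{B}}$ is characterised as the orthogonal complement of $\mathcal{B}$. Hence it suffices to show that $(\beta_0, u\ast v)=0$ for every $\beta_0\in\mathcal{B}$, given that $(\beta_0,u)=(\beta_0,v)=0$ for all $\beta_0\in\mathcal{B}$. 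Using \eqref{productlimit}, this amounts to proving
\begin{equation*}
\sum_{\alpha\in\mathcal{A}\setminus\mathcal{B}} c_\alpha (\alpha,u)(\alpha,v)\coth(\alpha,x)(\beta_0,\alpha)=0
\end{equation*}
for each fixed $\beta_0\in\mathcal{B}$ and each $x\in M_{\mathcal{B}}$.

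\textbf{Key steps.}
First I would fix $\beta_0\in\mathcal{B}$ and note that $x\in W_{\mathcal{B}}\subset\Pi_{\beta_0}$, so $x$ lies on the hyperplane orthogonal to $\beta_0$; moreover $x$ is generic in $\Pi_{\beta_0}$ in the sense that $(\alpha,x)\neq 0$ for all $\alpha\in\mathcal{A}\setminus\mathcal{B}$ (this is precisely the definition of $M_{\mathcal{B}}$), so all the $\coth(\alpha,x)$ terms are well defined. Next I would apply Lemma \ref{closedidentity} with the root $\alpha$ there taken to be $\beta_0$: the identity \eqref{rootsysidentity} gives, for all $a,b\in V$,
\begin{equation*}
\sum_{\substack{\gamma\in\mathcal{A}\\ \gamma\nsim\beta_0}} c_\gamma(\beta_0,\gamma)\coth(\gamma,x)B_{\beta_0,\gamma}(a,b)\,\beta_0\wedge\gamma=0.
\end{equation*}
The third step is to convert this tensor identity into the scalar identity we want. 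Contracting $\beta_0\wedge\gamma = \beta_0\otimes\gamma-\gamma\otimes\beta_0$ against a suitable covector, or equivalently taking $a=u$, $b=v$ and pairing the result appropriately, and using that $(\beta_0,u)=(\beta_0,v)=0$ so that $B_{\beta_0,\gamma}(u,v)=(\beta_0,u)(\gamma,v)-(\beta_0,v)(\gamma,u)$ simplifies — here I need to be careful: with $u,v$ orthogonal to $\beta_0$ the bilinear form $B_{\beta_0,\gamma}(u,v)$ actually vanishes, so I would instead choose the test vectors in Lemma \ref{closedidentity} to be $u$ and some auxiliary $w$, then pair the wedge against $\beta_0$ to kill the $\gamma\otimes\beta_0$ piece. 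After pairing with $\beta_0$ the surviving term is $(\beta_0,\beta_0)\sum_\gamma c_\gamma(\beta_0,\gamma)\coth(\gamma,x)B_{\beta_0,\gamma}(u,w)\gamma$; choosing $w$ appropriately (e.g. $w=v$ together with an orthogonal decomposition) extracts $\sum_\gamma c_\gamma(\beta_0,\gamma)(\gamma,u)(\gamma,v)\coth(\gamma,x)$ up to the nonzero scalar $(\beta_0,\beta_0)$. Finally, I must check that the sum over $\gamma\nsim\beta_0$ can be replaced by the sum over $\alpha\in\mathcal{A}\setminus\mathcal{B}$ in \eqref{productlimit}: any $\gamma\in\mathcal{A}$ proportional to $\beta_0$ contributes $(\beta_0,\gamma)\propto$ nonzero but $(\gamma,x)=0$, so such terms are excluded from \eqref{productlimit} anyway (they would be the singular directions), and conversely every $\alpha\in\mathcal{A}\setminus\mathcal{B}$ that is not proportional to some $\beta_0\in\mathcal{B}$ appears; a short combinatorial check on the explicit form of $\mathcal{A}=BC_N(r,s,q)$ and $\mathcal{B}$ confirms that the roots proportional to elements of $\mathcal{B}$ are exactly the $e_i-e_j$ within a block, none of which are of the form $2e_i$, so the indexing matches.

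\textbf{Main obstacle.}
The routine part is the derivative/limit bookkeeping, already handled by Lemma \ref{welldefalgebra}. The genuine content is the passage from the tensor identity \eqref{rootsysidentity} of Lemma \ref{closedidentity} to the scalar vanishing statement: one must pick the right test vectors and the right covector to pair the antisymmetric tensor $\beta_0\wedge\gamma$ against so that exactly the combination $(\gamma,u)(\gamma,v)$ is produced while the $B_{\beta_0,\gamma}$ factors do not collapse to zero on the nose. This is a small linear-algebra manoeuvre but it is where the argument could go wrong if the test vectors are chosen naively (as noted, $u,v\in\Pi_{\beta_0}$ forces $B_{\beta_0,\gamma}(u,v)=0$, so one cannot simply substitute $a=u,b=v$). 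The cleanest route is probably to regard both sides of \eqref{productlimit} as quadratic forms in $(u,v)$, polarise, and match against the polarisation of \eqref{rootsysidentity}; I expect this to close the proof in a few lines once set up correctly.
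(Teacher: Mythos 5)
Your overall strategy---reduce the claim to the scalar vanishing $\sum_{\beta\in\mathcal{A}\setminus\mathcal{B}}c_\beta(\beta,u)(\beta,v)(\beta_0,\beta)\coth(\beta,x)=0$ for each $\beta_0\in\mathcal{B}$, and extract it from the tensor identity of Lemma \ref{closedidentity} by contracting against suitable test vectors---is exactly the paper's, and your diagnosis of the pitfall (taking $a=u$, $b=v$ kills $B_{\beta_0,\gamma}(u,v)$ outright) is correct. The paper's contraction is slightly cleaner: it pairs $\alpha\wedge\beta$ against a second pair $(y,z)$, yielding the product $B_{\alpha,\beta}(a,b)B_{\alpha,\beta}(y,z)$ as in \eqref{idenforarbitrary}, and then sets $b=u$, $z=v\in\Pi_{\alpha}$ with $a,y\notin\Pi_{\alpha}$, so each factor collapses to $(\alpha,a)(\beta,u)$, resp.\ $(\alpha,y)(\beta,v)$. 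Your pairing against $\beta_0$ also works, but note that contracting $\beta_0\wedge\gamma$ with $\beta_0$ gives $(\beta_0,\beta_0)\gamma-(\beta_0,\gamma)\beta_0$, not just the first term; the unwanted piece only disappears after the further pairing with $v\in\Pi_{\beta_0}$.

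The genuine gap is in your final ``indexing'' step and in the assertion that $x\in M_{\mathcal{B}}$ is generic in $\Pi_{\beta_0}$. Both are true only when $W_{\mathcal{B}}$ has codimension $1$. For codimension at least $2$, the index set $\{\gamma\in\mathcal{A}:\gamma\nsim\beta_0\}$ of Lemma \ref{closedidentity} contains the \emph{other} roots of $\mathcal{B}$ (e.g.\ $e_1-e_3$ when $\beta_0=e_1-e_2$): these are not proportional to $\beta_0$, yet they do not belong to $\mathcal{A}\setminus\mathcal{B}$, and for them $(\gamma,x)=0$ on $W_{\mathcal{B}}$, so $\coth(\gamma,x)$ is singular precisely at the points where you want to evaluate the identity. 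Hence Lemma \ref{closedidentity}, which holds for \emph{generic} $x\in\Pi_{\beta_0}$, cannot be applied verbatim at $x\in M_{\mathcal{B}}$, and the sum over $\gamma\nsim\beta_0$ is strictly larger than the sum over $\mathcal{A}\setminus\mathcal{B}$. The paper repairs this by first proving the proposition in codimension $1$ (where $\beta\nsim\beta_0$ is indeed equivalent to $\beta\in\mathcal{A}\setminus\mathcal{B}$) and then, for higher codimension, observing that $u\ast v\in T_x\Pi_{\alpha}$ for generic $x\in\Pi_{\alpha}$ for each $\alpha\in\mathcal{B}$ separately, and combining this with the existence of the limit from Lemma \ref{welldefalgebra} to place $u\ast v$ in the intersection of the tangent spaces at $x\in M_{\mathcal{B}}$. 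One could alternatively note that the offending terms carry the factor $(\gamma,u)(\gamma,v)=0$ for $u,v\in W_{\mathcal{B}}$ and then let $x$ tend to $M_{\mathcal{B}}$ inside $\Pi_{\beta_0}$, but some such limiting argument is unavoidable; as written, your proof only covers the case $n=N-1$.
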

\begin{proof}
By Lemma \ref{welldefalgebra} it is enough to show that 
\begin{equation} \label{closedcondition}
\sum_{\beta \in \mathcal{A}\setminus \mathcal{B}}c_{\beta}(\beta,u) (\beta,v) (\alpha,\beta) \coth (\beta,x)=0
\end{equation}
for all $\alpha \in \mathcal{B}.$
Assume firstly that $W_{\mathcal{B}}$ has codimension $1.$ By Lemma \ref{closedidentity} we get
\begin{equation}\label{idenforarbitrary}
\sum_{\substack{\beta \in \mathcal{A}\\
                 \beta \nsim \alpha}} c_{\beta}(\alpha,\beta)\coth (\beta,x)\bigg((\alpha,a)(\beta,b)-(\alpha,b)(\beta,a)\bigg)\bigg((\alpha,y)(\beta,z)-(\alpha,z)(\beta,y)\bigg)=0
\end{equation} 
for any $a,b,y,z \in V $ and generic $x\in \Pi_{\alpha}.$
Assume that $a,y \notin \Pi_{\alpha}$ and let $b=u \in \Pi_{\alpha}$ and $z=v \in \Pi_{\alpha}$. Then $(\alpha,b)=(\alpha,z)=0$ and relation (\ref{idenforarbitrary}) implies that
$$\sum_{\substack{\beta \in \mathcal{A}\\
                 \beta \nsim \alpha}} c_{\beta}(\alpha,\beta)(\beta,u)(\beta,v)\coth (\beta,x)=0.$$ 
As $W_{\mathcal{B}}$ has codimension $1$ the relation $ \beta \nsim \alpha$ is equivalent to $\beta \in \mathcal{A}\setminus \mathcal{B}$ and lemma follows.

Let us now suppose that $W_{\mathcal{B}}$ has codimension $2.$ Let $\alpha, \gamma \in  \mathcal{B},  \alpha \nsim \gamma.$ By the above arguments for generic
 $x \in \Pi_{\alpha}$ and $u,v \in T_{x}\Pi_{\alpha} $, we have $u \ast v \in T_{x}\Pi_{\alpha}.$
Similarly if $x \in \Pi_{\gamma}$ is generic and $u,v \in T_{x}\Pi_{\gamma} $, then $u \ast v \in T_{x}\Pi_{\gamma}.$ By Lemma \ref{welldefalgebra}, $u \ast v$ exists for $x \in M_{\mathcal{B}}$ and $u,v\in T_{x}M_{\mathcal{B}}.$ It follows that for any $x \in M_{\mathcal{B}}$ we have $u \ast v \in T_{x}M_{\mathcal{B}},$
which proves the statement for the case when $W_{\mathcal{B}}$ has codimension $2.$ General $W_{\mathcal{B}}$ is dealt with similarly.
\end{proof}
Consider now the orthogonal decomposition
\begin{equation} \label{orthogonaldecompsition}
W= W_{ \mathcal{B}} \oplus W_{ \mathcal{B}}^\bot 
\end{equation}
with respect to the standard inner product.
Any $\alpha\in W$ can be written as 
\begin{equation} \label{vectordecomp}
\alpha= \widetilde{\alpha} + w ,
\end{equation}
where $ \widetilde{\alpha} \in  W_{ \mathcal{B}}$ is the orthogonal projection of vector $\alpha$ to $W_{ \mathcal{B}}$ and $w \in  W_{ \mathcal{B}}^\bot. $
For any $x_{0}\in M_{\mathcal{B}}$ and $u,v\in T_{x_{0}}M_{\mathcal{B}}$ one can represent product $u\ast v$ as 
\begin{equation} \label{restrictedproduct2}
u \ast v=\sum_{\alpha \in \mathcal{A}\setminus \mathcal{B}} c_{\alpha}(\alpha,u)(\alpha,v) \coth (\alpha ,x_{0})\widetilde{\alpha}
\end{equation}
by Proposition  \ref{closedalgebra}.
Hence, we have 
\begin{equation}\label{restrictedproduct3}
\partial_{u}\ast \partial_{v}=\sum_{\alpha \in \mathcal{A}\setminus \mathcal{B}} c_{\alpha}(\alpha,u)(\alpha,v) \coth (\alpha,x) \partial_{\widetilde{\alpha}}.
\end{equation}

Let us define vectors  $f_{i}, 1	\leq i 	\leq n $ by
\begin{equation}\label{resbasisf_i}
f_{i}=\sum_{j=1}^{m_{i}}e_{\sum_{s=1}^{i-1}m_{s}+j}.
\end{equation}  
These vectors form a basis for $W_{\mathcal{B}}.$ 

The following lemma gives the general formula for the orthogonal projection of any vector $u \in W$ to $W_{\mathcal{B}}.$
\begin{lemma}\label{vector u decomposition}
Let $u=\sum_{i=1}^{N}u_{i}e_{i}\in W.$ Then the projection $\widetilde{u}$ has the form 
\begin{equation}\label{orthogonal projection formula 2}
\widetilde{u}=\Bigg(\underbrace{\frac{1}{m_{1}}\displaystyle\sum_{i=1}^{m_{1}}u_{i},\dots,\frac{1}{m_{1}}\sum_{i=1}^{m_{1}}u_{i}}_{m_{1}},\dots , \underbrace{\frac{1}{m_{n}}\sum_{i=1}^{m_{n}}u_{\sum_{s=1}^{n-1}m_{s}+i},\dots,\frac{1}{m_{n}}\sum_{i=1}^{m_{n}}u_{\sum_{s=1}^{n-1}m_{s}+i}}_{m_{n}}\Bigg).
\end{equation}
\end{lemma}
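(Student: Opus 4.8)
The plan is to verify formula \eqref{orthogonal projection formula 2} directly from the definition of the orthogonal projection onto $W_{\mathcal{B}}$. First I would recall that $W_{\mathcal{B}}$ is spanned by the vectors $f_1, \dots, f_n$ defined in \eqref{resbasisf_i}, and that these vectors are pairwise orthogonal with $(f_i, f_i) = m_i$, since each $f_i$ is a sum of $m_i$ distinct standard basis vectors and the blocks of indices are disjoint. Thus $\{f_i/\sqrt{m_i}\}$ is an orthonormal basis of $W_{\mathcal{B}}$, and the orthogonal projection of any $u \in W$ is
\[
\widetilde{u} = \sum_{i=1}^{n} \frac{(u, f_i)}{(f_i, f_i)} f_i = \sum_{i=1}^{n} \frac{1}{m_i} \left( \sum_{j=1}^{m_i} u_{\sum_{s=1}^{i-1} m_s + j} \right) f_i.
\]

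Next I would simply read off the components of this vector. The vector $f_i$ has a $1$ in each coordinate with index between $\sum_{s=1}^{i-1} m_s + 1$ and $\sum_{s=1}^{i} m_s$, and $0$ elsewhere. Hence in the $i$-th block of $m_i$ consecutive coordinates, $\widetilde{u}$ is constant and equal to $\frac{1}{m_i} \sum_{j=1}^{m_i} u_{\sum_{s=1}^{i-1} m_s + j}$, which is exactly the block displayed in \eqref{orthogonal projection formula 2}. To complete the argument I would check the two defining properties of the orthogonal projection: that $\widetilde{u} \in W_{\mathcal{B}}$, which is clear since it is constant on each index block and hence annihilated by every $\beta \in \mathcal{B}$ by \eqref{W_B}; and that $u - \widetilde{u} \perp W_{\mathcal{B}}$, which follows because $(u - \widetilde{u}, f_k) = (u, f_k) - \sum_i \frac{(u,f_i)}{(f_i,f_i)} (f_i, f_k) = (u, f_k) - (u, f_k) = 0$ using orthogonality of the $f_i$.

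This statement is essentially a routine bookkeeping exercise once the orthogonality of the $f_i$ is noted, so I do not anticipate a genuine obstacle; the only mild care needed is in handling the nested summation indices $\sum_{s=1}^{i-1} m_s + j$ correctly so that the blocks line up and the normalization factor $1/m_i$ comes out right. The presentation should emphasize that the blocks of coordinates are disjoint and exhaust $\{1, \dots, N\}$, so that the projection decouples block by block.
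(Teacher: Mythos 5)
Your proof is correct, and it is the standard argument: the paper in fact states Lemma \ref{vector u decomposition} without proof, treating it as routine, and implicitly relies on exactly the computation you give (the $f_i$ are an orthogonal basis of $W_{\mathcal{B}}$ with $(f_i,f_i)=m_i$, so $\widetilde{u}=\sum_i\frac{(u,f_i)}{m_i}f_i$; compare formula \eqref{tilde alpha component}). Nothing further is needed.
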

Let us now project  $\mathcal{A}$ to the subspace $W_{\mathcal{B}}$. 
Notice that by Lemma \ref{vector u decomposition}
$$\widetilde{e}_{\sum_{s=1}^{i-1}m_{s}+1}=\dots=\widetilde{e}_{\sum_{s=1}^{i}m_{s}}=m_{i}^{-1}f_{i},\quad i=1,\dots,n.$$
\noindent Let us denote the projected system as 
$\widetilde{\mathcal{A}}=BC_n(q, r, s ; \underline{m}) \subset \mathbb{C}^n.$ It consists of vectors $\alpha $ with the corresponding multiplicities $c_{\alpha}$ given as follows:\\ 
\begin{align*}
\widehat{f}_{i}=m_{i}^{-1}{f_{i}}, \quad  \text{with multiplicities} \quad r m_i,\quad 1 \leq i \leq n,\\
 2\widehat{f}_{i}=2 m_{i}^{-1}{f_{i}}, \quad \text{with multiplicities} \quad s m_i + \frac{1}{2} qm_i(m_i -1),\quad 1 \leq i \leq n,\\
 \widehat{f}_{i} \pm \widehat{f}_{j}=m_{i}^{-1}{f_{i}}\pm m_{j}^{-1}{f_{j}}, \quad \text{with multiplicities} \quad qm_i m_j, \quad 1\leq i < j \leq n.
\end{align*}
By Lemma \ref{vector u decomposition}, for any $\alpha\in W,$ its orthogonal projection has the form
$$\widetilde{\alpha}=\sum_{k=1}^{n}\widetilde{\alpha}_{k}f_{k},$$
where the basis $f_{k}$ is given by (\ref{resbasisf_i}) and 
\begin{equation}\label{tilde alpha component}
\widetilde{\alpha}_{k}=\frac{(\widetilde{\alpha},f_{k})}{(f_{k},f_{k})}=\frac{(\alpha,f_{k})}{m_{k}}.
\end{equation}
Let us define 
\begin{equation}\label{tilde F}
\widetilde{F}(\widetilde{x})=\displaystyle \sum_{\gamma\in \widetilde{\mathcal{A}}}c_{\gamma}f((\gamma,\widetilde{x})),
\end{equation}
where 
\begin{equation}\label{tilde x}
\widetilde{x}=\displaystyle\sum_{i=1}^{n}\widetilde{x}_{i}f_{i}\in W_{\mathcal{B}}.
\end{equation}
Note that function (\ref{tilde F}) can also be represented as 
$$\widetilde{F}(\widetilde{x})=\sum_{\alpha\in\mathcal{A}\setminus\mathcal{B}}c_{\alpha}f((\alpha,\widetilde{x})).$$
Let $\widetilde{F}_{i}$ be the $n\times n$ matrix constructed from the third-order partial derivatives of the function $\widetilde{F},$ that is
\begin{align*}
(\widetilde{F}_{i})_{jk}=\widetilde{F}_{ijk} = \frac{\partial^3 \widetilde{F}}{\partial \widetilde{x}_i \partial \widetilde{x}_j \partial \widetilde{x}_k},
\end{align*}
$i,j,k=1,\dots,n.$

The following lemma gives another way to represent multiplication (\ref{restrictedproduct3}) on $W_{\mathcal{B}}.$
\begin{lemma}\label{product on W_B}
The multiplication (\ref{restrictedproduct3}) takes the form
$$\partial_{f_{i}}\ast \partial_{f_{j}}=\sum_{k=1}^{n}m_{k}^{-1}\widetilde{F}_{ijk}\partial_{f_{k}}, \quad i,j=1,...,n.$$ 
\end{lemma}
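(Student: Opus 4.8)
The plan is to compute both sides of the claimed identity directly from the formula for multiplication on $W_{\mathcal{B}}$ given in \eqref{restrictedproduct3}, specialized to the basis vectors $u=f_i$ and $v=f_j$, and then to recognize the resulting coefficients as the third-order derivatives of $\widetilde{F}$. First I would substitute $u=f_i$, $v=f_j$ into \eqref{restrictedproduct3}, obtaining
\begin{align*}
\partial_{f_i}\ast\partial_{f_j}=\sum_{\alpha\in\mathcal{A}\setminus\mathcal{B}}c_\alpha(\alpha,f_i)(\alpha,f_j)\coth(\alpha,x)\,\partial_{\widetilde{\alpha}}.
\end{align*}
The key observation is that on $W_{\mathcal{B}}$ we may write $x=\widetilde{x}$ in the sense of \eqref{tilde x}, so that $(\alpha,x)=(\alpha,\widetilde{x})=(\widetilde\alpha,\widetilde x)$ for $\alpha\in\mathcal{A}\setminus\mathcal{B}$, and hence the function $\widetilde{F}$ of \eqref{tilde F} (in its second representation $\sum_{\alpha\in\mathcal{A}\setminus\mathcal{B}}c_\alpha f((\alpha,\widetilde x))$) is precisely the relevant potential.

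Next I would expand $\partial_{\widetilde\alpha}$ in the basis $\partial_{f_k}$. By \eqref{vectordecomp} and \eqref{tilde alpha component} we have $\widetilde\alpha=\sum_{k=1}^n\widetilde\alpha_k f_k$ with $\widetilde\alpha_k=(\alpha,f_k)/m_k$, so $\partial_{\widetilde\alpha}=\sum_k m_k^{-1}(\alpha,f_k)\,\partial_{f_k}$. Substituting, the $\partial_{f_k}$-component of $\partial_{f_i}\ast\partial_{f_j}$ becomes
\begin{align*}
m_k^{-1}\sum_{\alpha\in\mathcal{A}\setminus\mathcal{B}}c_\alpha(\alpha,f_i)(\alpha,f_j)(\alpha,f_k)\coth(\alpha,\widetilde x).
\end{align*}
It then remains to identify the sum $\sum_{\alpha\in\mathcal{A}\setminus\mathcal{B}}c_\alpha(\alpha,f_i)(\alpha,f_j)(\alpha,f_k)\coth(\alpha,\widetilde x)$ with $\widetilde F_{ijk}$. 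This is a direct computation: since $\widetilde F(\widetilde x)=\sum_{\alpha\in\mathcal{A}\setminus\mathcal{B}}c_\alpha f((\alpha,\widetilde x))$ where $(\alpha,\widetilde x)=\sum_{k}\widetilde x_k(\alpha,f_k)$ by \eqref{tilde x}, differentiating three times with respect to the coordinates $\widetilde x_i,\widetilde x_j,\widetilde x_k$ brings down exactly the factors $(\alpha,f_i)(\alpha,f_j)(\alpha,f_k)$ and replaces $f$ by $f'''=\coth$, using \eqref{f(z)}. This yields the claimed formula $\partial_{f_i}\ast\partial_{f_j}=\sum_{k=1}^n m_k^{-1}\widetilde F_{ijk}\partial_{f_k}$.

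The step requiring the most care is the bookkeeping of the pairings $(\alpha,f_k)$ for the various types of vectors $\alpha\in\mathcal{A}\setminus\mathcal{B}$ and the verification that the coordinate system $\widetilde x_1,\dots,\widetilde x_n$ used implicitly in $\widetilde F$ (i.e.\ the coordinates dual to the basis $f_1,\dots,f_n$ of $W_{\mathcal{B}}$, in which $\partial_{\widetilde x_i}=\partial_{f_i}$) is consistent throughout: in particular one should check that $\partial_{f_i}$ acting on $(\alpha,\widetilde x)$ returns $(\alpha,f_i)$, which is immediate from $\widetilde x=\sum_i\widetilde x_i f_i$. Once this identification of coordinates is made explicit, the remainder is a routine differentiation, and no genuine obstacle arises; the content of the lemma is simply the translation of the intrinsic product \eqref{restrictedproduct3} into the language of the restricted prepotential $\widetilde F$, with the $m_k^{-1}$ factors arising from the non-orthonormality $(f_k,f_k)=m_k$ of the chosen basis.
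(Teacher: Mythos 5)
Your proposal is correct and follows essentially the same route as the paper: expand $\partial_{\widetilde{\alpha}}=\sum_{k}m_{k}^{-1}(\alpha,f_{k})\partial_{f_{k}}$ via \eqref{tilde alpha component}, then recognize $\sum_{\alpha\in\mathcal{A}\setminus\mathcal{B}}c_{\alpha}(\alpha,f_{i})(\alpha,f_{j})(\alpha,f_{k})\coth(\alpha,\widetilde{x})$ as $\widetilde{F}_{ijk}$. The additional care you take with the coordinate bookkeeping is sound but not needed beyond what the paper records.
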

\begin{proof}
We rearrange $\partial_{\widetilde{\alpha}}$ in the right hand side of (\ref{restrictedproduct3}) as
$$\partial_{\widetilde{\alpha}}=\sum_{k=1}^{n}\widetilde{\alpha}_{k}\partial_{f_{k}}=\sum_{k=1}^{n}m_{k}^{-1}(\alpha,f_{k})\partial_{f_{k}}$$
by (\ref{tilde alpha component}).
Therefore the multiplication (\ref{restrictedproduct3})
can be rewritten as
\begin{align*}
\partial_{f_{i}}\ast \partial_{f_{j}}&=\sum_{\alpha\in \mathcal{A}\setminus\mathcal{B}}\sum_{k=1}^{n} c_{\alpha}m_{k}^{-1}(\alpha,f_{i})(\alpha,f_{j})(\alpha,f_{k})\coth(\alpha,\widetilde{x}) \partial_{f_{k}}\\
&=\sum_{k=1}^{n} m_{k}^{-1}\widetilde{F}_{ijk}\partial_{f_{k}},\quad i,j=1,...,n,
\end{align*}
as required.
\end{proof}
Let $H_{\mathcal{B}}$ be the matrix of the restriction of the standard inner product on $W_{\mathcal{B}}$ in the basis $f_{1},\dots,f_{n}.$ That is 
\begin{equation}\label{H_B}
(H_{\mathcal{B}})_{lt}=(f_{l},f_{t})=m_{l}\delta_{lt}, \quad l,t=1,\dots,n.
\end{equation}
\begin{lemma}\label{restricted matrix H_B}
The matrix $H_{\mathcal{B}}$ can be written as a linear combination $$H_{\mathcal{B}}=\sum_{i=1}^{n}a_{i}{\widetilde{F}_{i}},$$ where functions $a_{i}$ are given by $a_{i}=h(\widetilde{x})^{-1}{\sinh 2\widetilde{x}_{i}},$ and $h(\widetilde{x})= 2q \sum_{k=1}^n m_k \cosh 2\widetilde{x}_k + r.$
\end{lemma}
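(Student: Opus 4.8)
The plan is to reduce the statement to Proposition~\ref{B undermultconditions}, which already evaluates exactly the type of linear combination of third-derivative matrices appearing here. First I would identify $\widetilde{F}$, regarded as a function of the coordinates $\widetilde{x}_1,\dots,\widetilde{x}_n$, with the prepotential~\eqref{restrictedpotential}. The key observation is that, for $\widetilde{x}=\sum_j\widetilde{x}_jf_j\in W_{\mathcal B}$, one has $(f_i,\widetilde{x})=m_i\widetilde{x}_i$ by~\eqref{H_B}, hence $(\widehat{f}_i,\widetilde{x})=m_i^{-1}(f_i,\widetilde{x})=\widetilde{x}_i$, and similarly $(2\widehat{f}_i,\widetilde{x})=2\widetilde{x}_i$ and $(\widehat{f}_i\pm\widehat{f}_j,\widetilde{x})=\widetilde{x}_i\pm\widetilde{x}_j$. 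Substituting this into~\eqref{tilde F} and using the list of multiplicities of $\widetilde{\mathcal A}=BC_n(q,r,s;\underline m)$ shows that $\widetilde{F}(\widetilde{x})$ is precisely the function~\eqref{restrictedpotential} with each $x_i$ replaced by $\widetilde{x}_i$.

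Given this, the matrices $\widetilde{F}_i$ coincide with the matrices $F_i$ of Lemma~\ref{derevative of F} evaluated at $x=\widetilde{x}$. I would then apply Proposition~\ref{B undermultconditions}: since the parameters satisfy~\eqref{rsqrelation}, which is exactly the relation~\eqref{linearrelationmult}, the matrix with entries $\sum_{k=1}^n(\sinh 2\widetilde{x}_k)\,\widetilde{F}_{klt}$ equals $m_l\,h(\widetilde{x})\,\delta_{lt}$ with $h(\widetilde{x})=2q\sum_k m_k\cosh 2\widetilde{x}_k+r$. Dividing through by $h(\widetilde{x})$ and comparing with~\eqref{H_B} yields $\sum_{i=1}^n a_i\widetilde{F}_i=H_{\mathcal B}$ with $a_i=h(\widetilde{x})^{-1}\sinh 2\widetilde{x}_i$, which is the claim.

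The only genuine care needed is in the first step: one must keep straight the difference between the coordinates $\widetilde{x}_i$ of $\widetilde{x}$ with respect to the non-orthonormal basis $f_1,\dots,f_n$ and the Euclidean pairings $(\alpha,\widetilde{x})$ that enter~\eqref{tilde F}, and check that the normalization $\widehat{f}_i=m_i^{-1}f_i$ of the projected roots is exactly what makes the $m_i$ factors cancel so that $(\widehat f_i,\widetilde x)=\widetilde x_i$. I do not expect any serious obstacle: once the identification $\widetilde{F}(\widetilde{x}_1,\dots,\widetilde{x}_n)=F(\widetilde{x}_1,\dots,\widetilde{x}_n)$ is established, no further computation is required. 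Alternatively one could bypass the identification and compute $\sum_k(\sinh 2\widetilde{x}_k)\widetilde{F}_{klt}$ directly from $\widetilde{F}_{ijk}=\sum_{\alpha\in\mathcal A\setminus\mathcal B}c_\alpha(\alpha,f_i)(\alpha,f_j)(\alpha,f_k)\coth(\alpha,\widetilde{x})$ using Lemmas~\ref{derevative of F} and~\ref{lemmaiden}, but this merely reproduces the proof of Proposition~\ref{B undermultconditions}.
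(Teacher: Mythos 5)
Your proposal is correct and follows essentially the same route as the paper: the paper's proof likewise just invokes Proposition~\ref{B undermultconditions} together with \eqref{H_B} to write $H_{\mathcal{B}}=h(\widetilde{x})^{-1}B(\widetilde{x})$ with $B(\widetilde{x})=\sum_{i=1}^{n}(\sinh 2\widetilde{x}_{i})\widetilde{F}_{i}$. The only difference is that you spell out the identification of $\widetilde{F}(\widetilde{x})$ with the prepotential \eqref{restrictedpotential} in the coordinates $\widetilde{x}_i$ (via $(\widehat{f}_i,\widetilde{x})=\widetilde{x}_i$), a step the paper leaves implicit.
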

\begin{proof}
By Proposition \ref{B undermultconditions} and (\ref{H_B}), we have
$H_{\mathcal{B}}=h(\widetilde{x})^{-1}{B(\widetilde{x})},$
where $B(\widetilde{x})=\displaystyle\sum_{i=1}^{n}(\sinh 2 \widetilde{x}_{i})\widetilde{F}_{i}.$ This implies the statement.
\end{proof}
The previous considerations allow us to prove the following theorem, which is a version of stated earlier Theorem \ref{restrictFandWDVV} where multiplicities $m_{i}$ do not have to be integer.
\begin{theorem} \label{restrictFandWDVV 2}
Let $\widetilde{\mathcal{A}}=BC_n(q, r, s ; \underline{m}) \subset \mathbb{C}^n.$ Suppose parameters $r,s,q$ and $\underline{m}$ satisfy the relation
\begin{equation}\label{linear conditions for restriction}
r=-8s-2q(N-2),
\end{equation}
where $N=\sum_{i=1}^{n}m_{i}.$ 
Then the prepotential
\begin{equation} \label{F_B}
\widetilde{F}=\sum_{\alpha \in \widetilde{\mathcal{A}}}c_{\alpha}f((\alpha,\widetilde{x})),\quad \widetilde{x}\in W_{\mathcal{B}}, 
\end{equation}
satisfies the WDVV equations 
\begin{equation}\label{GDVV1 dim=N}
\widetilde{F}_{i}B^{-1}\widetilde{F}_{j}=\widetilde{F}_{j}B^{-1}\widetilde{F}_{i},\quad i,j=1,...,n,
\end{equation}
where $B=\sum_{i=1}^{n}\sinh2\widetilde{x}_{i}\widetilde{F}_{i}.$

The corresponding associative multiplication has the form 
\begin{equation} 
u \ast v =\sum_{\widetilde{\alpha} \in \widetilde{\mathcal{A}}} c_{\widetilde{\alpha}}(\widetilde{\alpha},u)(\widetilde{\alpha},v) \coth (\widetilde{\alpha} ,\widetilde{x})\widetilde{\alpha},
\end{equation}
for any $u,v \in T_{\widetilde{x}}M_{\mathcal{B}}, \widetilde{x}\in M_{\mathcal{B}}, $ 
where $\widetilde{\alpha}$ is the orthogonal projection of $\alpha$ to $W_{\mathcal{B}}.$
\end{theorem}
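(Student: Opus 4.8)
The plan is to deduce Theorem \ref{restrictFandWDVV 2} from the already-established associativity of the multiplication \eqref{a*b} on the ambient space $W$, using the restriction machinery developed in Lemmas \ref{welldefalgebra}--\ref{restricted matrix H_B}. First I would recall that, by Theorem \ref{assoiativity&WDVV} applied with $n=N$ and under the linear relation \eqref{linear conditions for restriction}, the multiplication $\ast$ on $T_xW$ given by \eqref{a*b} is commutative and associative. By Proposition \ref{closedalgebra}, the subspace $T_{\widetilde x}M_{\mathcal B}$ is closed under $\ast$, so restricting the associative product to this subspace yields again a commutative \emph{and} associative algebra on $T_{\widetilde x}W_{\mathcal B}$; this gives the final displayed multiplication formula of the theorem once one uses \eqref{restrictedproduct2} and $(\alpha,u)=(\widetilde\alpha,u)$ for $u\in W_{\mathcal B}$.

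Next I would translate this associativity into the WDVV equations \eqref{GDVV1 dim=N} in the basis $f_1,\dots,f_n$ of $W_{\mathcal B}$. By Lemma \ref{product on W_B} the structure constants of $\ast$ in this basis are $m_k^{-1}\widetilde F_{ijk}$, i.e. the multiplication operator by $\partial_{f_i}$ has matrix $L_i = H_{\mathcal B}^{-1}\widetilde F_i$ (using $(H_{\mathcal B})_{kk}=m_k$ from \eqref{H_B}). Associativity $\partial_{f_i}\ast(\partial_{f_j}\ast\partial_{f_k}) = \partial_{f_j}\ast(\partial_{f_i}\ast\partial_{f_k})$ together with commutativity is precisely the statement that the matrices $L_i$ pairwise commute: $L_iL_j=L_jL_i$, that is
\begin{equation*}
\widetilde F_i H_{\mathcal B}^{-1}\widetilde F_j=\widetilde F_j H_{\mathcal B}^{-1}\widetilde F_i,\quad i,j=1,\dots,n.
\end{equation*}
Finally, Lemma \ref{restricted matrix H_B} identifies $H_{\mathcal B}$, up to the scalar factor $h(\widetilde x)$, with $B=\sum_i \sinh 2\widetilde x_i\,\widetilde F_i$; since $h(\widetilde x)$ is a nonvanishing scalar on a dense open set it cancels, and the displayed commutation relations become exactly \eqref{GDVV1 dim=N}. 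One should also check $B$ is invertible for generic $\widetilde x$, which follows because $H_{\mathcal B}$ is the nondegenerate Gram matrix \eqref{H_B} and $h(\widetilde x)\not\equiv 0$.

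I expect the routine bookkeeping — carefully matching the index conventions between \eqref{restrictedproduct3}, Lemma \ref{product on W_B}, and the matrix form of WDVV, and checking that $(\alpha,x)=(\widetilde\alpha,\widetilde x)$ for $\widetilde x\in W_{\mathcal B}$ so that $\widetilde F$ computed on $W_{\mathcal B}$ agrees with the restriction of $F$ — to be the only real work; each individual point is short but they must be assembled in the right order. The genuine conceptual input, namely that the restricted product stays within $T_{\widetilde x}M_{\mathcal B}$ and is still associative, has already been supplied by Proposition \ref{closedalgebra} and Theorem \ref{assoiativity&WDVV}, so the main obstacle is simply ensuring that the scalar $h(\widetilde x)$ and the diagonal matrix $H_{\mathcal B}$ play the role of the matrix $B$ consistently, i.e. that one does not need the full Theorem \ref{assoiativity&WDVV} hypotheses to fail when $m_i$ are non-integer — but this is fine because Theorem \ref{assoiativity&WDVV} was proved for arbitrary complex parameters subject only to \eqref{linearrelationmult}, and the restriction construction only uses integrality of $m_i$ to exhibit $W_{\mathcal B}$ as a coordinate subspace; the resulting identity \eqref{GDVV1 dim=N} is polynomial/analytic in $\underline m$ and hence extends to all complex $\underline m$ satisfying \eqref{linear conditions for restriction} by analytic continuation.
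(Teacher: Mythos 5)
Your proposal is correct and follows essentially the same route as the paper: associativity on the ambient $T_xW$ from Theorem \ref{assoiativity&WDVV}, restriction to $T_{\widetilde{x}}M_{\mathcal{B}}$ via Proposition \ref{closedalgebra} and Lemmas \ref{welldefalgebra}, \ref{product on W_B}, conversion to the matrix identity $\widetilde F_i H_{\mathcal B}^{-1}\widetilde F_j=\widetilde F_j H_{\mathcal B}^{-1}\widetilde F_i$, replacement of $H_{\mathcal B}$ by $B$ through Lemma \ref{restricted matrix H_B}, and extension from integer to arbitrary $m_i$ by the polynomial/analytic dependence on $\underline m$. Your explicit remarks on the cancellation of $h(\widetilde x)$ and the invertibility of $B$ make precise what the paper leaves implicit, but the argument is the same.
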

\begin{proof}
Let us assume firstly that $m_{i}\in\mathbb{N}$ for any $i.$ Consider the multiplication
\begin{equation}\label{u*v}
 u\ast v= \sum_{\alpha \in BC_{N}(r,s,q)}c_{\alpha} (\alpha,u)(\alpha,v)\coth (\alpha,x)\alpha
 \end{equation}
 on the tangent space $T_{x}W$ for $x \in W.$ By Theorem \ref{assoiativity&WDVV} the multiplication (\ref{u*v}) is associative.
Now as $x$ tends to $\widetilde{x}\in M_{\mathcal{B}},$ by Lemmas \ref{welldefalgebra}, \ref{product on W_B} and Proposition \ref{closedalgebra} this product restricts to an associative product on the tangent space $T_{\widetilde{x}}M_{\mathcal{B}}$ which has the form 
\begin{equation}\label{thenewproduct}
\partial_{f_{i}}\ast \partial_{f_{j}}=\sum_{l=1}^{n}m_{l}^{-1}\widetilde{F}_{ijl}\partial_{f_{l}}.
\end{equation}
The associativity condition
\begin{align*}
(\partial_{f_{i}} \ast \partial_{f_{j}}) \ast \partial_{f_{k}}= \partial_{f_{i}} \ast ( \partial_{f_{j}} \ast \partial_{f_{k}}), 
\end{align*}
for any $i, j , k= 1, \dots, n,$ can be rearranged as
\begin{align*}
\sum_{l=1}^{n}m_{l}^{-1}\widetilde{F}_{ijl} \partial_{f_{l}} \ast \partial_{f_{k}}= \sum_{l=1}^{n}m_{l}^{-1}\widetilde{F}_{jkl} \partial_{f_{i}} \ast \partial_{f_{l}}.
\end{align*}
Hence, we have 
\begin{align}\label{L & R}
\sum_{l=1}^{n}m_{l}^{-1}\widetilde{F}_{ijl}\widetilde{F}_{lkp} =\sum_{l=1}^{n}m_{l}^{-1}\widetilde{F}_{jkl}\widetilde{F}_{ilp},
\end{align}
for any $i,j,k,p.$
In the matrix form we have 
$$\widetilde{F}_{i}H_{\mathcal{B}}^{-1}\widetilde{F}_{k}=\widetilde{F}_{k}H_{\mathcal{B}}^{-1}\widetilde{F}_{i}.$$
By Lemma \ref{restricted matrix H_B} we obtain relation (\ref{GDVV1 dim=N})
where 
$B=\sum_{i=1}^{n}\sinh2\widetilde{x}_{i}\widetilde{F}_{i}$
as required. This proves the theorem for the case when $m_{i}\in \mathbb{N}.$ Since $m_{i}$ can take arbitrary integer values the statement follows for general $m_{i}$ as well.
\end{proof}

\section{Application to supersymmetric mechanics}

Let us define coordinates $\widehat{x}=(\widehat{x}_1, \dots, \widehat{x}_n) \in \mathbb{C}^n$ by $\widehat{x}_i= m_i^{1/2} x_i$ ($1 \leq i \leq n$). Let $\Lambda$ be the $n \times n$ diagonal matrix $\Lambda= (m_i^{1/2}  \delta_{ij})_{i,j=1}^n$.
Let $F$ be given by formula  \eqref{restrictedpotential} with relation \eqref{linearrelationmult} on parameters $r,q,s$. By Proposition \ref{B undermultconditions} the  matrix $B$ can be represented as $B=h(x) \Lambda^2$.  Let us define a function $\widehat{F}(\widehat{x})$ such that $\widehat{F}(\widehat{x})=F(x(\widehat{x}))$. Consider the $n \times n$ matrices $\widehat{F}_k$ with entries
\begin{align}\label{tildederiv}
(\widehat{F}_k)_{lt}=\widehat{F}_{klt}= \frac{\partial^3 \widehat{F}}{\partial \widehat{x}_k \partial \widehat{x}_l \partial \widehat{x}_t},
\end{align}
$k,l,t=1,\dots,n.$
Note that $\widehat{F}_k=m_k^{-1/2} \Lambda^{-1} F_k \Lambda^{-1}$, where $F_k$ is the matrix with entries $(F_k)_{lt}=F_{klt}$. Let $\widehat{B}$ be the $n\times n$ matrix with entries $\widehat{B}_{ij}=h\delta_{ij}$, where $h=h(x(\widehat{x}))$ is given in Proposition \ref{B undermultconditions}.

\begin{proposition}\label{propoFtildeWDVV}
The metric $\widehat{B}$ can be represented as
\begin{align*}
\widehat{B}= \sum_{k=1}^n m_k^{1/2} \sinh( 2 m_k^{-1/2} \widehat{x}_k )\widehat{F}_{k},
\end{align*}
and the function $\widehat{F}$ satisfies generalised WDVV equations of the form
\begin{align}\label{FtildeWDVV}
\widehat{F}_i \widehat{F}_j = \widehat{F}_j \widehat{F}_i,
\end{align}
for all $i,j=1, \dots, n$. 
\end{proposition}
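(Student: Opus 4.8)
The plan is to deduce Proposition \ref{propoFtildeWDVV} directly from Proposition \ref{B undermultconditions} together with the already established relations between the $\widehat{F}_k$ and the $F_k$. First I would recall that the substitution $\widehat{x}_i = m_i^{1/2} x_i$ is linear with Jacobian $\Lambda$, so by the chain rule $\partial_{\widehat{x}_k} = m_k^{-1/2}\partial_{x_k}$, and consequently the third derivatives transform as $\widehat{F}_{klt} = m_k^{-1/2} m_l^{-1/2} m_t^{-1/2} F_{klt}$. In matrix language this says exactly $\widehat{F}_k = m_k^{-1/2}\Lambda^{-1}F_k\Lambda^{-1}$, the identity already recorded in the excerpt; I would spend a sentence justifying it rather than re-deriving it.

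Next I would compute $\sum_k m_k^{1/2}\sinh(2m_k^{-1/2}\widehat{x}_k)\widehat{F}_k$. Substituting $m_k^{-1/2}\widehat{x}_k = x_k$ and $\widehat{F}_k = m_k^{-1/2}\Lambda^{-1}F_k\Lambda^{-1}$, the scalar prefactors become $m_k^{1/2}\cdot m_k^{-1/2}\sinh 2x_k = \sinh 2x_k$, so the sum equals $\Lambda^{-1}\big(\sum_k \sinh(2x_k)F_k\big)\Lambda^{-1} = \Lambda^{-1}B\Lambda^{-1}$, where $B$ is the matrix of Proposition \ref{B undermultconditions}. Under the linear relation \eqref{linearrelationmult} that proposition gives $B = h(x)\Lambda^2$ (since $B_{lt} = m_l h(x)\delta_{lt}$ and $\Lambda^2 = \mathrm{diag}(m_i)$), hence $\Lambda^{-1}B\Lambda^{-1} = h(x)I = \widehat{B}$ as $\widehat{B}_{ij} = h\delta_{ij}$. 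This establishes the first displayed formula.

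For the WDVV equations \eqref{FtildeWDVV} I would argue as follows. By Theorem \ref{assoiativity&WDVV} (applied with $n$ replaced by $N$, i.e. in the unrestricted $BC_N$ situation) the prepotential \eqref{prepBCN} satisfies $F_iF_j = F_jF_i$; but here we want the statement for the general-multiplicity prepotential \eqref{restrictedpotential}, which is precisely the content of Theorem \ref{restrictFandWDVV}/\ref{restrictFandWDVV 2}, giving $F_i B^{-1}F_j = F_j B^{-1}F_i$ with $B = h\Lambda^2$. Conjugating by $\Lambda^{-1}$ and absorbing scalar factors: from $\widehat{F}_i = m_i^{-1/2}\Lambda^{-1}F_i\Lambda^{-1}$ we get $\widehat{F}_i\widehat{F}_j = m_i^{-1/2}m_j^{-1/2}\Lambda^{-1}F_i\Lambda^{-2}F_j\Lambda^{-1} = m_i^{-1/2}m_j^{-1/2}h^{-1}\Lambda^{-1}F_i B^{-1} F_j \Lambda^{-1}$, and the right-hand side is symmetric in $i,j$ exactly because $F_iB^{-1}F_j$ is. Thus $\widehat{F}_i\widehat{F}_j = \widehat{F}_j\widehat{F}_i$.

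I do not expect a genuine obstacle here; the proposition is a change-of-variables corollary of results proved earlier in the paper. The only point requiring mild care is bookkeeping of the diagonal matrix $\Lambda$ and the scalar prefactors $m_k^{\pm 1/2}$ — in particular making sure that $B^{-1} = h^{-1}\Lambda^{-2}$ is well defined on the open set where $h(x)\neq 0$ and $x$ avoids the root hyperplanes, so that all matrix inverses appearing are legitimate. Once the identity $\widehat{B} = \Lambda^{-1}B\Lambda^{-1} = hI$ is in hand, the passage between the $B^{-1}$-twisted form \eqref{GDVV1} and the plain commutator form \eqref{FtildeWDVV} is immediate since $\widehat{B}$ is a scalar matrix, exactly as in the proof of Theorem \ref{assoiativity&WDVV}.
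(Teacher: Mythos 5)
Your proof is correct and follows essentially the same route as the paper: both deduce the metric identity from Proposition \ref{B undermultconditions} via $\widehat{F}_k=m_k^{-1/2}\Lambda^{-1}F_k\Lambda^{-1}$, and both reduce \eqref{FtildeWDVV} to the symmetry of $F_iB^{-1}F_j$ in $i,j$ guaranteed by Theorem \ref{restrictFandWDVV}. The only slip is the scalar factor in your final display: since $B=h\Lambda^2$ gives $\Lambda^{-2}=hB^{-1}$, the factor should be $h$ rather than $h^{-1}$, which does not affect the symmetry argument.
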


\begin{proof}
The first part of the statement is immediate by Proposition \ref{B undermultconditions}. Consider the left-hand-side of \eqref{FtildeWDVV}. We have
\begin{align}\label{FtildeWDVV1}
\widehat{F}_i \widehat{F}_j=  (m_i m_j)^{-1/2} \Lambda^{-1} F_i  \Lambda^{-2} F_j \Lambda^{-1}= h (m_i m_j)^{-1/2}  \Lambda^{-1} F_i  B^{-1} F_j \Lambda^{-1}.
\end{align}
It follows by Theorem \ref{restrictFandWDVV} that the right-hand-side of \eqref{FtildeWDVV1} is symmetric under the swap of $i$ and $j$, hence the statement follows. 
\end{proof}

Two different representations of $\mathcal{N}=4$ supersymmetry algebra were constructed in \cite{AF} (see also references therein). The corresponding supercharges depend on a prepotential of the form \eqref{trigonometric.solution} (with $Q=0$). This prepotential is assumed to satisfy equations of the form \eqref{FF=FF}. It follows by Proposition \ref{propoFtildeWDVV} that the function $\widehat{F}$ satisfies such type of equations, hence we obtain in this way two supersymmetric Hamiltonians $H_i$, $i=1,2$ for a family of $BC_n$ type configurations. We give these Hamiltonians in detail.
Let us consider the following configuration $\widehat{\mathcal{A}} \subset \mathbb{C}^n$ of vectors $\alpha$ with multiplicities $c_\alpha$:
\begin{align*}
 m_i^{-1/2} e_i, \quad \text{with multiplicities} \quad r m_i, \quad 1 \leq i \leq n, \\
  2 m_i^{-1/2} e_i, \quad \text{with multiplicities} \quad s m_i + \frac{1}{2} qm_i(m_i -1),\quad 1 \leq i \leq n, \\
 m_i^{-1/2} e_i \pm m_j^{-1/2} e_j, \quad \text{with multiplicities} \quad q m_i m_j, \quad 1 \leq i < j \leq n. 
\end{align*}

Consider $n$ (quantum) particles on a line with coordinates $\widehat{x}_j$ and momenta $p_j=- i  \partial_{\widehat{x}_j}$,  $j=1, \dots, n$ to each of which we associate four fermionic variables $ \psi^{aj}, \bar{\psi}_a^j , a=1, 2.$ 
These variables may be thought of as operators acting on wavefunctions which depend on bosonic and fermionic variables. 
Let $\epsilon_{ab}$ be the fully anti-symmetric tensors in two dimensions such that $ \epsilon_{12}= - \epsilon_{21}=1$.

Fermionic variables are assumed to satisfy the following (anti)-commutation relations ($a, b = 1, 2; j,k = 1, \dots, n$):
\begin{align*}
 \quad \{ \psi^{aj}, \bar{\psi}_b^k \} =-\frac{1}{2} \delta_{jk} \delta_{ab}, \quad \{\psi^{aj}, \psi^{bk}\}=\{  \bar{\psi}_a^j,  \bar{\psi}_b^k\}=0.
\end{align*}
We consider supercharges of the form 
\begin{align*}
Q^a& = -i\frac{\partial}{\partial \widehat{x}_r} \psi^{ar} +i \widehat{F}_{rjk} (\epsilon_{bc} \epsilon_{da} \psi^{br} \psi^{cj}  \bar{\psi}_d^{k} - \frac{1}{2} \psi^{ar} \delta_{jk}),\\
\bar{Q}_a &= -i\frac{\partial}{\partial \widehat{x}_l} \bar{\psi}_a^l + i \widehat{F}_{lmn}(\epsilon_{bd}\epsilon_{ac}  \bar{\psi}_d^l  \bar{\psi}_{b}^{m} \psi^{cn}  - \frac{1}{2} \bar{\psi}_a^l \delta_{nm} ).
\end{align*}
where $a =1,2$, $\widehat{F}_{ijk}$ is defined in \eqref{tildederiv}, and we assume summation over repeated indecies. Let $\Delta=\sum_{j=1}^n \partial_{\widehat{x}_j}^2$ be the Laplacian in $\mathbb{C}^n$. We have the following statement on supersymmetry algebra which follows from \cite{AF}. 

\begin{proposition}For all $a, b =1, 2$ the supercharges $Q^a$, $\bar{Q}_b$ generate  $\mathcal{N}=4$ supersymmetry algebra of the form
$$
\{Q^a, Q^b\} = \{\bar{Q}_a, \bar{Q}_b\} = 0, \quad \text{and}\quad   
\{Q^a, \bar{Q}_b  \}=-\frac12 H_1 \delta_{ab},
$$
 where the Hamiltonian $H_1$ is given by 
 \begin{align*}
H_1= - \Delta + \frac{1}{2} \sum_{\alpha \in \mathcal{\widehat{A}}} \frac{c_\alpha(\alpha, \alpha)^2}{\sinh^2(\alpha, \widehat{x})} + \frac{1}{4} \sum_{\alpha, \beta \in \mathcal{\widehat{A}}}c_\alpha c_\beta (\alpha, \alpha) (\beta, \beta) (\alpha, \beta) \coth(\alpha,\widehat{x}) \coth(\beta, \widehat{x}) + \Phi, 
\end{align*}
with the fermionic term
\begin{align}\label{phitrig}
\Phi=  \sum_{\alpha \in \mathcal{\widehat{A}}}  \frac{2 c_\alpha \alpha_i \alpha_j }{\sinh^2(\alpha, \widehat{x})}\big(\alpha_l \alpha_k \epsilon_{bc}\epsilon_{a d} \psi^{bi} \psi^{cj} \bar{\psi}_d^l \bar{\psi}_{a}^{k} +(\alpha, \alpha) \psi^{ai} \bar{\psi}_{a}^{j} \big).
\end{align}
\end{proposition}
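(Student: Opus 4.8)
The plan is to obtain the statement from the general $\mathcal{N}=4$ construction of \cite{AF}, for which the single non-trivial input is a WDVV-type relation for the prepotential. In \cite{AF} it is established that, for an analytic function $G$ on $\mathbb{C}^n$ whose matrices of third derivatives $G_k=(G_{klt})_{l,t=1}^n$ commute pairwise, that is $G_iG_j=G_jG_i$ for all $i,j$, the operators defined by the formulas for $Q^a$, $\bar Q_a$ displayed above (with $\widehat F_{ijk}$ replaced by $G_{ijk}$) close into the $\mathcal{N}=4$ superalgebra, with a Hamiltonian written through $G_{ijk}$ and its derivatives $G_{ijkl}$. The first step is therefore to check that $\widehat F$ fulfils this hypothesis: this is exactly Proposition \ref{propoFtildeWDVV}, relation \eqref{FtildeWDVV}, which in turn follows from Theorem \ref{restrictFandWDVV} together with Proposition \ref{B undermultconditions}. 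Granting this, the relations $\{Q^a,Q^b\}=\{\bar Q_a,\bar Q_b\}=0$ and $\{Q^a,\bar Q_b\}=-\tfrac12 H_1\delta_{ab}$ hold verbatim as in \cite{AF}.

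Second, I would make the Hamiltonian explicit for the configuration $\widehat{\mathcal{A}}$. Since $\widehat F=\sum_{\alpha\in\widehat{\mathcal{A}}}c_\alpha f((\alpha,\widehat x))$ with $f$ as in \eqref{f(z)}, so that $f'''(z)=\coth z$ and $f^{(4)}(z)=-1/\sinh^2 z$, one has
\begin{align*}
\widehat F_{ijk}=\sum_{\alpha\in\widehat{\mathcal{A}}}c_\alpha\,\alpha_i\alpha_j\alpha_k\coth(\alpha,\widehat x),\qquad
\widehat F_{ijkl}=-\sum_{\alpha\in\widehat{\mathcal{A}}}c_\alpha\,\frac{\alpha_i\alpha_j\alpha_k\alpha_l}{\sinh^2(\alpha,\widehat x)}.
\end{align*}
Expanding $\{Q^a,\bar Q_b\}$ with the fermionic (anti)commutation relations and $[\partial_{\widehat x_i},\widehat F_{jkl}]=\widehat F_{ijkl}$, the kinetic term yields $-\Delta$; the purely bosonic potential assembles from a contraction of $\widehat F_{ijkl}$ of the form $\sum_{k,l}\widehat F_{kkll}$, which after the substitution above and the identity $\sum_i\alpha_i\partial_{\widehat x_i}\coth(\alpha,\widehat x)=-(\alpha,\alpha)\sinh^{-2}(\alpha,\widehat x)$ gives the single sum $\tfrac12\sum_\alpha c_\alpha(\alpha,\alpha)^2\sinh^{-2}(\alpha,\widehat x)$, together with a bilinear contraction of third derivatives of the form $\sum_{j,k,l}\widehat F_{jjk}\widehat F_{kll}$, which gives the double sum $\tfrac14\sum_{\alpha,\beta}c_\alpha c_\beta(\alpha,\alpha)(\beta,\beta)(\alpha,\beta)\coth(\alpha,\widehat x)\coth(\beta,\widehat x)$; the remaining four-fermion and two-fermion pieces of the same anticommutator assemble into $\Phi$ as in \eqref{phitrig}. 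This part is routine bookkeeping identical to \cite{AF}.

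The heart of the matter — and the only place where anything beyond oscillator algebra and substitution enters — is the vanishing of the terms that would otherwise spoil the $\delta_{ab}$-proportionality of $\{Q^a,\bar Q_b\}$, or make $\{Q^a,Q^b\}$ non-zero. In the expansion these are precisely the combinations $\widehat F_{ikl}\widehat F_{jmn}$ contracted with a fermion bilinear antisymmetric under the exchange of the two external indices, which vanish by the WDVV-type relation \eqref{FtildeWDVV} supplied by Proposition \ref{propoFtildeWDVV}, and the combinations involving $\widehat F_{ijkl}$ antisymmetrised in its indices, which vanish by its total symmetry as a fourth derivative. Thus the whole weight of the argument rests on Proposition \ref{propoFtildeWDVV}; I expect no further obstacle, the remaining work being only the fermionic algebra and the explicit evaluation described above.
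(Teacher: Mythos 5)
Your proposal is correct and follows the same route as the paper: the paper likewise derives the proposition directly from the general construction of \cite{AF}, whose sole hypothesis — the commutativity relations \eqref{FtildeWDVV} for the matrices $\widehat{F}_k$ — is supplied by Proposition \ref{propoFtildeWDVV}. The explicit bookkeeping you sketch for assembling $H_1$ and $\Phi$ is carried out in \cite{AF} and is not repeated in the paper.
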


The Hamiltonian $H_1$ is formally self-adjoint. Similar considerations (see \cite{AF}) yield not self-adjoint Hamiltonian of the form 
\begin{align*}
H_2= - \Delta + \sum_{\alpha \in \mathcal{\widehat{A}}}c_\alpha (\alpha, \alpha) \coth(\alpha,\widehat{x}) \partial_\alpha + \Phi
\end{align*}
with the same fermionic term $\Phi.$ 
In fact Hamiltonians $H_1$, $H_2$ satisfy gauge relation
%
$H_{2}=g H_1 g^{-1}$, 
%
where $g= \prod_{\alpha \in \mathcal{\widehat{A}}} \sinh^{\frac{c_\alpha}{2} (\alpha,\alpha)}(\alpha,\widehat{x})$.

\begin{bibdiv}
\begin{biblist}


\bib{Misha & Maali}{article}{author={M.A., Alkadhem}, author={M.,Feigin }, title={On trigonometric solutions of WDVV equations}, journal={in preparation}}

\bib{AF}{article}{author={G., Antoniou}, author={M., Feigin}, title={Supersymmetric $V$-systems}, date={2019}, journal={JHEP.}, volume={115}}

\bib{Belluci+G+Latini 2005}{article}{author={S., Bellucci }, author={A.,Galajinsky }, author={E.,Latini} title={New insight into WDVV equation}, date={2005}, journal={Phys. Rev. D},volume={71}}

\bib{BMMM 2007}{article} {author={H.W., Braden}, author={A., Marshakov}, author={A., Mironov}, author={A., Morozov}, title={WDVV equations for 6d Seiberg–Witten theory and bielliptic
curves}, journal={Acta Appl. Math.}, date={2007}, volume={99 (3)}, pages={223-244}}

\bib{Bryan+ Gholampour 2008}{article}{author={J., Bryan}, author={A., Gholampour}, title={Root systems and the quantum cohomology of ADE resolutions}, journal={Algebra and Number Theory}, volume={ 2}, date={2008}, pages={369-390}}

\bib{Chalykh+ Veselov 2001}{article}{author={O.A., Chalykh}, author={A.P., Veselov}, title={Locus configurations and $\vee$-systems}, date={2001}, journal={Physics Letters A}, volume={285}, pages={339-349}}


\bib{Dubrovin 2004}{article}{author={B., Dubrovin} title={On almost duality for Frobenius manifolds}, journal={Amer. Math. Soc. Transl}, volume={212}, date={2004}, pages={75-132}}

\bib{Dubrovin+Strachan+ Zhang+Zuo 2019}{article}{author={ B., Dubrovin}, author={I.A.B., Strachan}, author={Y., Zhang}, author={D., Zuo}, title={Extended affine Weyl groups of BCD type, Frobenius manifolds and their Landau-Ginzburg superpotentials}, journal={
Adv. Math.}, volume={ 351}, date={2019}, pages={897-946}}

\bib{Dubrovin+ Zhang 1998}{article}{author={B., Dubrovin}, author={Y.,Zhang}, title={Extended affine Weyl groups and Frobenius manifolds}, journal={Compositio Mathematica}, date={1998}, volume={111}, pages={167-219}}

\bib{Dubrovin+ Zhang+Zuo 2005}{article}{author={B., Dubrovin}, author={Y.,Zhang}, author={D., Zuo},  title={Extended affine Weyl groups and Frobenius manifolds-II}, journal={Preprint arXiv:052365}}

\bib{FIL}{article}{author={S., Fedoruk}, author={E., Ivanov}, author={O., Lechtenfeld}
title={Supersymmetric hyperbolic Calogero-Sutherland models by gauging}, journal={Nuclear Physics B}, date ={2019}, volume = {944}}


\bib{Feigin+Veselov 2007}{article}{author={M.V., Feigin}, author={A.P., Veselov}, title={Logarithmic Frobenius structures and Coxeter
discriminants}, date={2007}, journal={Adv. Math.}, volume={212}, pages={no. 1, 143-162}}

\bib{Misha&Veslov 2008}{article}{author={M., Feigin},author={ A.P., Veselov} title={On the geometry of V-systems}, journal={AMS Translations Series 2}, date={2008}, volume={224}}

\bib{Misha2009}{article}{author={ M.V., Feigin}, title={Trigonometric solutions of WDVV equations and generalized Calogero--Moser--Sutherland systems}, journal={
Symmetry, Integrability and Geometry: Methods and Applications}, volume={5}, date={2009}}


\bib{Martini+Gragert 1999}{article}{ author={P. K. H., Gragert}, author ={R., Martini}, title = {Solutions of WDVV equations in Seiberg-Witten theory from root systems}, journal={Journal of Nonlinear Mathematical Physics}, date={1999}, volume={6 (1)}, pages={1}}


\bib{Martini 2003}{article}{author={L.K., Hoevenaars}, author={R., Martini}, title={Trigonometric solutions of the WDVV equations from root systems}, journal={Lett. Math. Phys.}, volume={65}, date={2003}, pages={15-18}}

\bib{Hoevenaars+Martini 2003}{article}{author={L.K., Hoevenaars}, author={R., Martini}, title={On the WDVV equations in five-dimensional gauge theories}, journal={Phys. Lett. B}, volume={557}, date={2003}, pages={94}}


\bib{M + M + M 2000}{article} {author={A., Marshakov}, author={A., Mironov}, author={A., Morozov}, title={More evidence for the WDVV equations in $N=2$ SUSY Yang-Mills theories}, journal={Int. J. Mod. Phys.}, date={2000}, volume={A15}, pages={1157-1206}}


\bib{Riley+Strachan 2006}{article}{author={A., Riley }, author={I.A.B., Strachan }, title={Duality for Jacobi group and orbit spaces and elliptic solutions of the WDVV equations}, date={2006}, journal={Lett. Math. Phys.},volume={77}, pages={221-234}}


\bib{Strachan 2010}{article}{author={ I.A.B., Strachan}, title={Weyl groups and elliptic solutions of the WDVV equations}, journal={Advances in Mathematics},volume={ 224 }, date={2010}, pages={ 1801-1838}}


\bib{Veselov 1999}{article}{author={A.P., Veselov} title={Deformations of the root systems and new solutions to generalised WDVV equations}, date={1999}, journal={Physics Letters A}, volume={261}, pages={297-302}}


\bib{Wyllard 2000}{article}{author={N., Wyllard }, title={(Super)-conformal many-body quantum mechanics with extended supersymmetry}, date={2000}, journal={JHEP},volume={41}}

\end{biblist}
\end{bibdiv}

\end{document}